\newtheorem{theorem}{Theorem}[section]
\newtheorem{proposition}[theorem]{Proposition}
\newtheorem{lemma}[theorem]{Lemma}
\newtheorem{corollary}[theorem]{Corollary}
\newtheorem{definition}[theorem]{Definition}
\newtheorem{problem}[theorem]{ Problem}
\theoremstyle{definition}
\newtheorem*{remark}{Remark}
\def\cA{\mathcal{A}}
\newcommand{\Z}{\mathbb{Z}}
\newcommand{\N}{\mathbb{N}}
\title[]{Universal Spatiotemporal Sampling Sets for Discrete Spatially Invariant Evolutionary Systems}
\author{Sui Tang}
\address{Department of Mathematics}
\curraddr{Johns Hopkins University, Baltimore, MD}
\email{stang.jhu@gmail.com}
\subjclass[2010]{Primary 94A20, 94A12, 42C15, 15A29}
\keywords{Spatiotemporal sampling; Finite frames; Discrete Fourier Transform; Interpolation; Full spark matrix.}
\thanks{This research was supported in part by NSF Grant  DMS-1322099.}
\begin{document}

\maketitle
\begin{abstract}
Let $(I,+)$ be a finite abelian group and $\mathbf{A}$ be a circular convolution operator on $\ell^2(I)$.  The problem under consideration is how to construct minimal $\Omega \subset I$ and $l_i$ such that $Y=\{\mathbf{e}_i, \mathbf{A}\mathbf{e}_i, \cdots, \mathbf{A}^{l_i}\mathbf{e}_i: i\in \Omega\}$ is a frame  for $\ell^2(I)$, where $\{\mathbf{e}_i: i\in I\}$ is the canonical basis of $\ell^2(I)$. This problem is motivated by the spatiotemporal sampling problem in discrete spatially invariant evolution systems. We will show that the cardinality of $\Omega $ should be at least equal to the largest geometric multiplicity of eigenvalues of  $\mathbf{A}$, and we consider the universal spatiotemporal sampling sets $(\Omega, l_i)$ for convolution operators $\mathbf{A}$ with eigenvalues  subject to the same largest geometric multiplicity. We will give an algebraic characterization for such sampling sets and show how this problem is linked with sparse signal processing theory and polynomial interpolation theory.
\end{abstract}

\section{Introduction}
 Let $(I,+)$ be a finite abelian group, we denote by $\ell^2(I)$ the space of square summable complex-valued functions defined on $I$ with the inner product given by
 $$\langle \mathbf{f}, \mathbf{g}\rangle =\sum_{i \in I} \mathbf{f}(i)\overline{\mathbf{g}(i)}, \text{ for } \mathbf{f}, \mathbf{g} \in \ell^2(I).$$ We denote by $\{\mathbf{e}_i: i\in I\}$ the canonical basis of $\ell^2(I)$, where $\mathbf{e}_i$ is the characteristic function of $\{i\} \subset I$. 
 
 \begin{definition} The operator $\mathbf{A}$ on $\ell^2(I)$  is called a circular convolution operator if there exists a complex-valued function $\mathbf{a}$ defined on $I$ such that 
$$(\mathbf{A f})(k):=\mathbf{a*f}=\sum_{i\in I}\mathbf{a}(i)\mathbf{f}(k-i), \forall \mathbf{f} \in \ell^2(I) \text { and } k \in I.$$
The function $\mathbf{a}$ is said to be the convolution kernel of $\mathbf{A}$. 
\end{definition}
 
Recall that a finite sequence $\{\mathbf{g}_j\} \subset \ell^2(I)$ is said to be a frame for $\ell^2(I)$, if  there exist positive numbers $A$ and $B$ such that 
 $$A \|\mathbf{f} \|^2 \leq \sum_{j } \lvert \langle \mathbf{f}, \mathbf{g}_j \rangle \rvert^2 \leq B \|\mathbf{f} \|^2, \forall \mathbf{f} \in \ell^2(I).$$

 A finite sequence $\{\mathbf{g}_j\} \subset \ell^2(I)$ is said to be complete if the vector space spanned by $\{\mathbf{g}_j\}$ is $\ell^2(I)$, i.e, 
 if a function  $\mathbf{g} \in \ell^2(I)$ satisfies  $$\langle \mathbf{g}, \mathbf{g}_j \rangle=0, \forall  \mathbf{g}_j,$$
 then $\mathbf{g}=\mathbf{0}.$ In the finite dimensional scenario, it is a standard result of  finite frame theory that a finite sequence $\{\mathbf{g}_j\} \subset \ell^2(I)$ is a frame for $\ell^2(I)$ if and only if it is complete.  
 
 
Let $\mathbf{A}$ be a circular convolution operator on $\ell^2(I)$. We investigate the following problem:
\begin{problem}\label{prob1}
Let $\Omega$ be a proper subset of $I$. We assign a finite nonnegative integer $l_i$ to each $i \in \Omega$. Under what conditions on $\Omega$ and $l_i$ is the sequence 
\begin{equation}\label{seq1}
Y=\{\mathbf{e}_i, \mathbf{A}\mathbf{e}_i,\cdots, \mathbf{A}^{l_i}\mathbf{e}_i: i\in \Omega\}
\end{equation}  a frame for $\ell^2(I)$? In what way does the choice of $\Omega$ and $l_i$ depend on the operator $\mathbf{A}$? 
\end{problem}

Problem \ref{prob1} is motivated by the spatiotemporal sampling and reconstruction problem arising in 
spatially invariant evolution systems \cite{ AADP13, RAS15, AS14, ADK12, ADK13, AKE14, Lv09, JP, JAYM11, JM13}. Let $\mathbf{f} \in \ell^2(I)$ be an unknown vector that is evolving under the iterated actions of a convolution operator $\mathbf{A}$, such that at time instance $t=n$ it evolves to be $\mathbf{A}^n\mathbf{f}$. We call such a discrete evolution system \textit{spatially invariant}. This kind of evolution system may arise as a discretization of a physical process; for example, the diffusion process modeled by the heat equations. We are interested in recovering the initial state $\mathbf{f}$. In practice,  a large number of sensors are distributed to monitor a physical process such as pollution, temperature or pressure \cite{WSN}.  The sensor nodes obtain spatiotemporal samples of physical fields over the region of interest. Increasing the spatial sampling rate is often much more expensive than increasing the temporal sampling rate since the cost of the sensor is more expensive than the cost of activating the sensor \cite{Lv09}.  Given the different costs associated with spatial and temporal sampling, it would be more economically efficient to recover the initial state $\mathbf{f}$ from spatiotemporal samples by using fewer sensors with more frequent acquisition. Following the notation defined in Problem \ref{prob1},  $\Omega$ will be the locations of sensors. At each  $i \in \Omega$, $\mathbf{f}$ is sampled at time instances  $t=0, \cdots, l_i$. Denote by $\mathbf{A}^*$ the adjoint operator of $\mathbf{A}$. By Lemma 1.2 in \cite{ACUS14}, $\mathbf{f}$ can be recovered stably from these spatiotemporal samples if and only if $\overline Y=\{ \mathbf{e}_i, \mathbf{A}^{*}\mathbf{e}_i,\cdots, (\mathbf{A}^{l_i})^{*}\mathbf{e}_i: i\in \Omega \}$ is a frame for $\ell^2(I)$. Note that $\mathbf{A}^*$ is also a circular convolution operator, and its convolution kernel is $\bar{\mathbf{a}}$.  Thus, Problem \ref{prob1} is equivalent to a signal recovery problem in a discrete spatially invariant evolution process.

\subsection{Contribution}
In this paper, we mainly study the cases $I=\Z_d$ and $I=\Z_d \times \Z_d$, where $\Z_d=\{0,1,\cdots,d-1\}$ denotes the finite group of integers modulo $d$ for a  positive integer $d$ and $\Z_d \times \Z_d$ is the product group. In the case of $I=\Z_d$,  we show that the cardinality of $\Omega$ should be at least the largest geometric multiplicity of eigenvalues of $\mathbf{A}$.  We characterize the minimal universal spatiotemporal sampling sets $(\Omega, l_i)$ for  circular convolution operators with eigenvalues subject to the same geometric multiplicity.  Specifically, we show how full spark frames built from the discrete Fourier matrices correspond to minimal universal constructions and demonstrate a close connection with  the sparse signal processing theory.  In the case of $I=\Z_{d} \times \Z_{d}$, we show that finding nontrivial full spark frames from 2D Fourier matrices is less favorable than its 1D sibling. In practice, there are many situations in real applications where the convolution operators have various types of symmetries in frequency response. For these special cases, we employ ideas from interpolation theory of multivariate polynomials to provide specific constructions of minimal sets $\Omega$ and $l_i$ such that Problem \ref{prob1} is solved. The techniques we developed can be adapted to the general finite abelian group case; see related discussions in Section \ref{finiteabeliangroup}.

\subsection{Related Work}
Our work is  closely related to \cite{BAW}, in which the authors have studied the universal spatial sensor locations for discrete bandlimited space; in some sense, finding universal spatiotemporal sampling sets for convolution operators with eigenvalues subject to the same largest geometric multiplicity  in our problem,  is analogous to finding universal spatial sampling sets for discrete bandlimited space $\mathbb{B}^{\mathcal{J}}$ that is subject to the  same cardinality of $\mathcal{J}$ in \cite{BAW}.  However, we do not make sparsity assumptions on the signal space. Instead, we seek sub-Nyquist spatial sampling rate, but want to compensate the insufficient spatial sampling rate by oversampling in time. \\

Our work also has similarities with  \cite{Lv09, JP, JAYM11, JM13}. These works studied the spatiotemporal sampling and reconstruction problem in the continuous diffusion field $\mathbf{f}(x,t)=\mathbf{A}_t\mathbf{f}(x)$, where $\mathbf{f}(x)=\mathbf{f}(x,0)$ is the initial signal and $\mathbf{A}_t$ is the time varying Gaussian convolution kernel determined by the diffusion rule. In \cite{ADK12, ADK13}, Aldroubi and his collaborators develop the mathematical framework of  \textit{Dynamical Sampling} to study the spatiotemporal sampling and reconstruction problem in  discrete spatially invariant evolution processes, which can be viewed as a discrete version of diffusion-like processes. Our results can be viewed as an extension of  \cite{ADK12, ADK13} to the irregular setting and the algebraic characterization given in \cite{ADK12} can be viewed as a special case of our characterization for the unions of periodic constructions.  \\

Other similar works are about  the generalized sampling problems \cite{BAEG13, HAM16, GPV09, GM15, AP77} and the distributed sampling problems \cite{PGH11, CKL08, CYQ15, HRLV10, GGK12}.  For example, in \cite{HAM16}, the authors work in a  $U$-invariant space, and study linear systems $\{L_j:j=1,\cdots,s\}$ such that one can recover any $f$ in  the $U$-invariant space by uniformly downsampling the functions $\{ (L_jf):j=1,\cdots,s\}$, i.e. taking the  generalized samples $\{ (L_jf)(M \alpha)\}_{\alpha \in \Z_d, j=1,\cdots,s}$.


\subsection{Preliminaries and Notation} \label{notation}
In the following, we use standard notation. By $\mathbb{N}$, we denote the set of all positive integers. For $m,n \in \mathbb{N}$, we use $gcd(m,n)$ to denote their greatest common divisor. The linear space of all column vectors with $M$ complex components is denoted by $\mathbb{C}^M$. The linear space of all complex $M \times N$ matrices is denoted by $\mathbb{C}^{M\times N}$.   For a matrix $\mathbf{B} \in \mathbb{C}^{M\times N}$, its transpose is denoted by $\mathbf{B}^T$ and its conjugate-transpose by $\mathbf{B}^*.$ The null space of $\mathbf{B}$ is denoted by $\ker(\mathbf{B})$.  For convenience, both the index of rows and columns for a matrix $\mathbf{B}$ start at 0. If $\mathbf{B} \in \mathbb{C}^{M\times N}$ and $S \subset \{0,1,\cdots,N-1\}$, then let $\mathbf{B}_S$ denote the submatrix of $\mathbf{B}$ obtained by selecting row vectors of $\mathbf{B}$ corresponding to $S$. For example, $\mathbf{B}_{\{0\}}$ means the first row vector of $\mathbf{B}$.  We use $\mathbf{I}_{M\times M}$ to denote the the identity matrix in $\mathbb{C}^{M\times M}$. If $\mathbf{B}$ is a diagonalizable matrix, we denote by $M_{\mathbf{B}}$ the largest geometric multiplicity among all eigenvalues of $\mathbf{B}$ and by $N_{\mathbf{B}}$ the number of distinct eigenvalues of $\mathbf{B}$. We denote by $\mathbf{0}$ the null vector in the vector space. For a vector $\mathbf{z}=(z_i) \in \mathbb{C}^M$, the $M \times M$ diagonal matrix built from $\mathbf{z}$ is denoted by $\mathbf {diag}(\mathbf{z})$. The $\ell^{\infty}$ norm of $\mathbf{z}$ is defined by $\|\mathbf{z}\|_{\ell^{\infty}}=\max_{i}|z_i|$.

For $d \in \N$, we use $\Z_d=\{0,1,\cdots, d-1\}$ to denote the finite group of integers modulo $d$.  If $d=mJ(m>1)$, then $m\Z_d$ denotes the subgroup of $\Z_d$ obtained by selecting the elements divisible by $m$. Let $c$ be an integer between 0 and $m-1$, then $m\Z_d+c=\{c+i: i \in m\Z_d\}$ denotes the  translation of $m\Z_{d}$ by $c$ units, where $``+"$ is  addition modulo $d$. We denote by $S^1$ the unit circle in the plane. With $\omega_d=e^{\frac{-2\pi i}{d}}$ denoting the primitive $d$th root of unity, we define the normalized discrete Fourier matrix via $\mathbf{F}_d=\frac{1}{\sqrt{d}}(\omega_{d}^{jk})_{j,k=0,\cdots d-1}$ and denote  by $ {\mathbf{\hat{f}}}$ the unnormalized discrete Fourier transform of $\mathbf{f}$. For a set $\Omega \subset I$, we use $|\Omega|$ to denote its cardinality.  If $\Omega \subset\{0,\cdots, M-1\}$,  then the subsampling operator $S_{\Omega}$ is  the linear operator that maps  $\mathbf{z} \in  \mathbb{C}^M$ to a vector  in $\mathbb{C}^M$ obtained by modifying the entries of $\mathbf{z}$ whose indices are not in $S$ to be zero.  
 
\begin{definition}\label{levelsets} A set $\Lambda \subset I $ is said to be a level set of $\mathbf{a} \in \ell^2(I)$ if $\mathbf{a}     \equiv
c$ on $\Lambda$ and for all $i \in I- \Lambda$, $\mathbf{a}(i) \neq c$. 
\end{definition}

\begin{definition} Let $\Omega \subset I$. Then the set $\Omega$ is said to be an admissible set for  the convolution operator $\mathbf{A}$ if there exists $l_i \in \N$ for each $i \in \Omega$ such that $Y=\{\mathbf{e}_i, \mathbf{A}\mathbf{e}_i,\cdots, \mathbf{A}^{l_i}\mathbf{e}_i: i\in \Omega\}$ is a frame for $\ell^2(I)$. 
\end{definition}

\begin{definition} Let $\mathbf{v} \in \mathbb{C}^d$ and $\mathbf{B} \in \mathbb{C}^{d\times d}$; then the $\mathbf{B}$-annihilator of $\mathbf{v}$ is the monic polynomial of smallest degree among all the monic polynomials $p$ such that $p(\mathbf{B})\mathbf{v} =\mathbf{0}$. 
\end{definition}
\begin{definition}For $\mathbf{B}=(b_{ij})\in \mathbb{C}^{M \times N}$ and $\mathbf{C}=(c_{ij}) \in \mathbb{C}^{L \times K}$, the Kronecker product of 
$\mathbf{B}$ and $\mathbf{C}$ is defined as the $ML\times NK$ block matrix 
\begin{equation}
\label{Kroneckproduct}
\mathbf{B} \otimes \mathbf{C}=\left[ \begin{array}{ccc} 
b_{0,0}\mathbf{C} & \cdots & b_{0,N-1}\mathbf{C}\\
\vdots & & \vdots \\
b_{M-1,0}\mathbf{C} & \cdots & b_{M-1,N-1}\mathbf{C}\\
\end{array}
\right]. 
\end{equation}
\end{definition}

 The row indices and column indices of $\mathbf{B} \otimes \mathbf{C}$ can be naturally labeled by means of the product group $\Z_M \times \Z_N.$ More precisely, in the row indices $(i_1,i_2)$, $i_1$ is intended to denote on which block row an entry is situated(the row of $\mathbf{B}$), while the index $i_2$ denotes more specifically on which position of its block row it is situated (the row of $\mathbf{C}$). The same rule applies on the column indices. For example, 
the entry of $\mathbf{B} \otimes \mathbf{C}$ at position $((i_1,i_2), (j_1,j_2))$ is $b_{i_1,j_1}c_{i_2,j_2}$.


\subsection{Organization}
The rest of paper is as follows: in Section \ref{prelim}, we present main results for the case $I=\Z_d$.  The algebraic characterization stated in Proposition \ref{prop1}  is the key of solving Problem \ref{prob1}. In Section \ref{twovariablecase}, we consider the case $I=\Z_{d} \times \Z_{d}$. We investigate the similarities and differences with 1D case. A discussion of the general abelian group case is presented in Section \ref{finiteabeliangroup}.  Finally, we summarize our paper in Section \ref{conclusion}.


\section{Single Variable Case $I=\Z_d$}\label{prelim}

In this section, we consider Problem \ref{prob1} for $I=\Z_d$. Motivated by the application in the spatiotemporal sampling problem, we would like to use the least number of sensors to save the budget. So the important issues of Problem \ref{prob1} we want to address  are: 
\begin{enumerate}
\item Given a circular convolution operator $\mathbf{A}$,  what is the minimal cardinality of $\Omega$ such that $Y$ can be a frame for $\ell^2(I)$?
\item Can we find $\Omega$ with minimal cardinality such that $Y$ is a frame?  If so, how to  determine $l_i$  for each $i\in \Omega$? 
\end{enumerate}

In other words, we seek to find minimal admissible sets $\Omega$ for $\mathbf{A}$. Suppose that $\mathbf{A}$ is given by a convolution kernel $\mathbf{a} \in \ell^2(\Z_d)$. We know that  $\mathbf{A}$ admits the spectral decomposition
\begin{equation}\label{equation1}
\mathbf{A} = \mathbf{F}^{\ast}_{d} \mathbf{diag} (\mathbf{\hat a}) \mathbf{F}_{d}. 
\end{equation}
This fact follows from the convolution theorem:
\begin{equation}\label{equation1plus}
\widehat{\mathbf{Af}} = \mathbf{diag (\hat a)} \hat {\mathbf{f}}, \text{ for } \mathbf{f} \in \ell^2(I).
\end{equation}

In  Proposition \ref{prop1}, we will use an algebraic characterization based on the spectral decomposition of $\mathbf{A}$ to answer the questions we just proposed. Proposition \ref{prop1} explains the frame properties for the examples below:  Consider the following two convolution kernels on $\ell^2(\Z_4)$ which define $ \mathbf{A}_1$ and $\mathbf{A}_2$ respectively:

$$\mathbf{\hat a}_1=[1,2, 3, 4]^T, \text{       } \mathbf{ \hat a}_2=[1,2, 1,2]^T.$$

For the convolution operator $\mathbf{A}_1$, Proposition \ref{prop1} shows that for any $i \in \Z_4$, 
$Y = \{\mathbf{e}_i, \mathbf{Ae}_i,\mathbf{A}^2\mathbf{e}_i, \mathbf{A}^{3}\mathbf{e}_i  \}$ is a frame for $\ell^2(\Z_4)$. For the convolution operator $\mathbf{A}_2$, Proposition \ref{prop1}  implies that the cardinality of $\Omega$  should be at least 2.  If we choose $\Omega=\{1,2\}$, then $Y = \{\mathbf{e}_i, \mathbf{Ae}_i:  i\in \Omega \}$ is a frame for $\ell^2(\Z_4)$. However, if we choose $\Omega=\{1,3\}$, then no matter how large $l_i$ is,  $Y = \{\mathbf{e}_i, \mathbf{Ae}_i,\cdots, \mathbf{A}^{l_i}\mathbf{e}_i : i\in \Omega \}$ is never a frame for $\ell^2(\Z_4)$.

Our  characterization can be viewed as a special case of Theorem 2.5 in \cite{ACUS14}, but for convenience we still state it here and prove it.
\begin{proposition}
\label{prop1}
Let $\mathbf{A} = \mathbf{F}^{\ast}_{d} \mathbf{diag} (\mathbf{\hat a}) \mathbf{F}_{d}$ be a circular convolution operator with a kernel $\mathbf{a} \in \ell^2(\Z_d)$, $\{ \lambda_1,\cdots, \lambda_{N_{\mathbf{A}}}\}$ be the set of distinct eigenvalues of $\mathbf{A}$ and  $\{P_k: k=1,\cdots N_\mathbf{A}\}$ be the corresponding eigenspace  projections of $\mathbf{diag} (\mathbf{\hat a})$.  Suppose $\Omega \subset \Z_d$, we define $\mathbf{f}_i =\mathbf{F}_d\mathbf{e}_i$. Let $Y = \{\mathbf{e}_i, \mathbf{Ae}_i,\cdots, \mathbf{A}^{l_i}\mathbf{e}_i : i \in \Omega \}$. 
\begin{enumerate}
\item If $Y$ is a frame for $\ell^2(\Z_d)$, then for each $k$, $\{ P_k(\mathbf{f}_i): i \in \Omega\}$ is a frame for the range space $E_k$ of $P_k$. Hence it is necessary to have $\lvert \Omega \rvert \geq \max_{k=1,\cdots,{N}_{\mathbf{A}}}{dim E_k}$. 
\item For each $i \in \Omega $, we define  by $r_i$ the degree of the $ \mathbf{A}$-annihilator of $\mathbf{e}_i$.  If for each $i \in \Omega$  and each $k$, $l_i \geq r_i-1$ and  $\{ P_k(\mathbf{f}_i): i \in {\Omega} \}$  is a frame for the range space $E_k$ of $P_k$ , then $Y$ is a frame for $\ell^2(\Z_d)$.

\item Let $E$ be the linear span of vectors in $Y$. If for each $i \in \Omega$ and each $k$, $ \mathbf{A}^{l_i+1} \mathbf{e}_i \in E$  and  $\{ P_k(\mathbf{f}_i): i \in {\Omega} \}$  is a frame for the range space $E_k$ of $P_k$, then $Y$ is a frame for $\ell^2(\Z_d)$.
\end{enumerate}
\end{proposition}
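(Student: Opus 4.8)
The plan is to pass to the Fourier side, where $\mathbf{A}$ becomes diagonal, and translate the frame property of $Y$ (equivalently, by the completeness criterion recalled above, the spanning property) into a statement about the diagonal operator $\mathbf{D}:=\mathbf{diag}(\mathbf{\hat a})$. Since $\mathbf{F}_d$ is unitary, $Y$ spans $\ell^2(\Z_d)$ if and only if $\mathbf{F}_d Y=\{\mathbf{D}^n\mathbf{f}_i: i\in\Omega,\ 0\le n\le l_i\}$ spans $\ell^2(\Z_d)$, using $\mathbf{F}_d\mathbf{A}^n\mathbf{e}_i=\mathbf{D}^n\mathbf{F}_d\mathbf{e}_i=\mathbf{D}^n\mathbf{f}_i$ from \eqref{equation1}. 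Writing $\mathbf{D}=\sum_{k}\lambda_k P_k$ with the $P_k$ pairwise orthogonal projections onto the coordinate eigenspaces $E_k$, I get $\mathbf{D}^n\mathbf{f}_i=\sum_k \lambda_k^n P_k\mathbf{f}_i$, the identity that drives everything.

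For part (1), suppose $\mathbf{F}_d Y$ is complete but, toward a contradiction, that $\{P_{k_0}\mathbf{f}_i:i\in\Omega\}$ fails to span $E_{k_0}$ for some $k_0$. Then there is a nonzero $\mathbf{v}\in E_{k_0}$ with $\langle \mathbf{v},P_{k_0}\mathbf{f}_i\rangle=0$ for every $i\in\Omega$. Because $\mathbf{v}\in E_{k_0}$ and the $P_k$ are self-adjoint with orthogonal ranges, $\langle \mathbf{v},P_k\mathbf{f}_i\rangle=\langle P_k\mathbf{v},\mathbf{f}_i\rangle=0$ for all $k\neq k_0$ as well, so $\langle \mathbf{v},\mathbf{D}^n\mathbf{f}_i\rangle=\sum_k\overline{\lambda_k}^{\,n}\langle \mathbf{v},P_k\mathbf{f}_i\rangle=0$ for all $n$ and $i$. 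This contradicts completeness. Hence each $\{P_k\mathbf{f}_i:i\in\Omega\}$ spans $E_k$, forcing $|\Omega|\ge\dim E_k$ and therefore $|\Omega|\ge\max_k\dim E_k$.

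For the sufficiency direction I would first record the key cyclic-subspace fact. Since $p(\mathbf{A})\mathbf{e}_i=\mathbf{0}$ is equivalent to $p(\lambda_k)P_k\mathbf{f}_i=\mathbf{0}$ for every $k$, the $\mathbf{A}$-annihilator of $\mathbf{e}_i$ is $\prod_{k:P_k\mathbf{f}_i\neq\mathbf{0}}(x-\lambda_k)$; its roots are exactly those $\lambda_k$ with $P_k\mathbf{f}_i\neq\mathbf{0}$, each simple, so $r_i=\#\{k:P_k\mathbf{f}_i\neq\mathbf{0}\}$ and $\{\mathbf{e}_i,\dots,\mathbf{A}^{r_i-1}\mathbf{e}_i\}$ is a basis of the cyclic subspace $Z_i:=\mathrm{span}\{\mathbf{A}^n\mathbf{e}_i:n\ge 0\}$. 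A Vandermonde argument on the distinct $\lambda_k$ then shows $\mathbf{F}_d Z_i=\mathrm{span}\{\mathbf{D}^n\mathbf{f}_i:n\ge0\}=\bigoplus_{k:P_k\mathbf{f}_i\neq\mathbf{0}}\mathrm{span}(P_k\mathbf{f}_i)$; in particular $\mathbf{F}_dZ_i$ contains every $P_k\mathbf{f}_i$. For part (3), the hypothesis $\mathbf{A}^{l_i+1}\mathbf{e}_i\in E$ makes $E=\mathrm{span}(Y)$ invariant under $\mathbf{A}$ (applying $\mathbf{A}$ to a generator $\mathbf{A}^n\mathbf{e}_i$ keeps it in $E$: for $n<l_i$ trivially, for $n=l_i$ by hypothesis), so $Z_i\subseteq E$ for every $i$, whence $\mathbf{F}_dE\supseteq\mathrm{span}\{P_k\mathbf{f}_i:i\in\Omega,\,k\}$. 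Since $\{P_k\mathbf{f}_i:i\in\Omega\}$ spans $E_k$ for each $k$ and $\bigoplus_k E_k=\ell^2(\Z_d)$, we get $\mathbf{F}_dE=\ell^2(\Z_d)$, so $Y$ is a frame. Part (2) is then the special case $l_i\ge r_i-1$, for which $\mathbf{A}^{l_i+1}\mathbf{e}_i\in Z_i=\mathrm{span}\{\mathbf{A}^n\mathbf{e}_i:0\le n\le l_i\}\subseteq E$ holds automatically.

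The main thing to get right, and the only step beyond bookkeeping, is the cyclic-subspace identity in the sufficiency part: establishing that the degree $r_i$ of the $\mathbf{A}$-annihilator equals the number of eigenspaces meeting $\mathbf{f}_i$ nontrivially, and that the Vandermonde system in the distinct eigenvalues recovers each individual projection $P_k\mathbf{f}_i$ from the iterates $\{\mathbf{D}^n\mathbf{f}_i\}$. Everything else reduces to the orthogonality of the $P_k$ and the completeness-equals-frame equivalence in finite dimensions.
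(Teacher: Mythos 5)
Your proof is correct and follows essentially the same route as the paper: pass to the Fourier side via the unitary $\mathbf{F}_d$, decompose $\mathbf{diag}(\mathbf{\hat a})$ into its orthogonal eigenprojections, and use a Vandermonde system in the distinct eigenvalues together with the degree of the $\mathbf{A}$-annihilator to reduce everything to the frame property of $\{P_k\mathbf{f}_i\}$ in each $E_k$. The only difference is one of packaging — you argue primally (showing $\mathrm{span}(Y)$ is $\mathbf{A}$-invariant, contains each cyclic subspace, and hence each $P_k\mathbf{f}_i$, proving (3) first and deducing (2)), whereas the paper argues dually by testing against a vector orthogonal to $Y$ and proves (2) first — but these are the same computation read in opposite directions.
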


\begin{proof} Applying $\mathbf{F}_d$ on $Y$, we obtain a new set of vectors 
$$\tilde{Y}=\{ \mathbf{f}_i, \mathbf{diag} (\mathbf{\hat a})\mathbf{f}_i, \cdots, (\mathbf{diag}(\mathbf{\hat a}))^{l_i}\mathbf{f}_i : i\in \Omega\}.$$
Since $\mathbf{F}_d$ is unitary, $\tilde{Y}$ is a frame for $\ell^2(\Z_d)$ if and only if $Y$ is a frame for $\ell^2(\Z_d)$.

\begin{enumerate}
\item Since we are working with finite dimensional spaces, for each $k$, to show that $\{ P_k(\mathbf{f}_i):  i \in {\Omega} \}$ is a frame for $E_k$, it suffices to show it  is complete. Now suppose that $\mathbf{g} \in E_k$ is orthogonal to vectors in $\{ P_k(\mathbf{f}_i): i \in  {\Omega} \}$, then 
$$\langle \mathbf{g}, \mathbf{diag} (\mathbf{\hat a})^l\mathbf{f}_i \rangle=\langle \mathbf{g}, P_k \mathbf{diag} (\mathbf{\hat a})^l\mathbf{f}_i \rangle=\lambda_j^l \langle \mathbf{g}, P_k(\mathbf{f}_i) \rangle =0$$ for $l=0,\cdots,l_i$ and all $ i \in \Omega$, which means that $\mathbf{g}$ is orthogonal to $\tilde{Y}$. In the above identities, we use the facts that  $P_k\mathbf{g}=\mathbf{g}$, $P_k^2=P_k$ and $P_k \mathbf{diag} (\mathbf{\hat a})=\lambda_k P_k$.  Since $Y$ is a frame, $\tilde{Y}$ is a frame and we conclude that $\mathbf{g}=\mathbf{0}$. Therefore,  $ \{P_k(\mathbf{f}_i): i \in  {\Omega}\}$ is a frame for $E_k$. \\

\item  Given the conditions that  the set of vectors $\{P_k(\mathbf{f}_i): i \in {\Omega} \}$ form a frame for $E_k$ for  $k=1,\cdots, {N}_{\mathbf{A}} $ and $l_i \geq r_i-1$ for $i \in \Omega$,  we will show that $\tilde {Y}$ is complete on $\ell^2(\Z_d)$.  Assume that $\mathbf{g} \in \ell^2(\Z_d)$ is orthogonal to $\tilde{Y}$, i.e., 
$$\langle \mathbf{g}, \mathbf{diag} (\mathbf{\hat a})^l\mathbf{f}_i  \rangle=\mathbf{0}$$ for  $i\in \Omega$ and $l=0,\cdots,l_i$. Since $l_i \geq r_i-1$ and  $r_i$ is the degree of the $\mathbf{A}$-annihilator of $\mathbf{e}_i$ for  $i \in \Omega$, $\mathbf{diag} (\mathbf{\hat a})^l \mathbf{f}_i$ can be written as a linear combination of  vectors in $\tilde{Y}$ for any $l \in \N$ and  we obtain 
$$\langle \mathbf{g},  \mathbf{diag} (\mathbf{\hat a})^l\mathbf{f}_i\rangle=0$$ for  $i\in \Omega$ and $l=0,\dots,N_{\mathbf{A}}-1$.   Note that $\sum\limits_{k=1}^{N_{\mathbf{A}}}{P}_k$ is the identity map, we have the  identity
\begin{equation}
\label {compcd}
\langle  \mathbf{g},  \mathbf{diag} (\mathbf{\hat a})^l\mathbf{f}_i  \rangle=\sum\limits_{k=1}^{N_{\mathbf{A}} }\langle P_j\mathbf{g},  \mathbf{diag} (\mathbf{\hat a})^l\mathbf{f}_i \rangle=\sum\limits_{k=1}^{N_{\mathbf{A}}}\lambda^l_k\langle \mathbf{g}, P_k\mathbf{f}_i\rangle=0
\end{equation}
for  $i\in \Omega$ and $l=0,\dots,N_{\mathbf{A}}-1$. Hence we obtain the following system of linear equations for each $i \in \Omega$:

$$
\left[
  \begin{array}{cccc}
    1& 1&\cdots &1\\
    \lambda_1&\lambda_2& \cdots & \lambda_{N_{\mathbf{A}}}\\
    \vdots &\vdots&  & \vdots \\
    \lambda_1^{N_{\mathbf{A}}-1}&\lambda_2^{N_{\mathbf{A}}-1}& \cdots&\lambda_{N_{\mathbf{A}}}^{N_{\mathbf{A}}-1}  \\
  \end{array}
\right]\left[
  \begin{array}{c}
    \langle \mathbf{g}, P_1\mathbf{f}_i\rangle \\
   \langle \mathbf{g}, P_2\mathbf{f}_i\rangle \\
    \vdots \\
    \langle \mathbf{g}, P_{N_{\mathbf{A}}}\mathbf{f}_i\rangle \\
  \end{array}
\right]=\mathbf{0}. 
$$

 Since the elements of  $\{\lambda_k: k=1,\cdots,N_{\mathbf{A}}\}$ are  distinct, we conclude that $$\langle  \mathbf{g}, P_k\mathbf{f}_i \rangle=\langle P_k \mathbf{g}, P_k\mathbf{f}_i \rangle=0$$ for $k=1, \cdots, N_{\mathbf{A}}$ and $i \in \Omega$. Since $ \{P_k(\mathbf{f}_i): i\in {\Omega}\}$ is a frame for $E_k$, we conclude that $P_k{\mathbf{g}}=\mathbf{0}$ for $k=1,\cdots,N_{\mathbf{A}}$. Therefore,  $\mathbf{g}=\sum\limits_{k=1}^{N_{\mathbf{A}}} P_k{\mathbf{g}}=\mathbf{0}$.  We prove that $\tilde{Y}$ is complete, hence it is a frame and $Y$ is a frame. 

\item The proof is similar to (2). 

\end{enumerate}

\end{proof}


Proposition \ref{prop1} gives a lower bound of the largest geometric multiplicity of eigenvalues of $\mathbf{A}$ for the cardinality of an admissible set $\Omega$. It also tells us in what way the choice of $\Omega$ depends on $\mathbf{A}$ and how to determine $l_i$.  In \cite{ADK12}, the authors have considered the case when $\mathbf{a}$ is a typical low pass filter, i.e,  $\mathbf{\hat {a}}$ is real, symmetric and strictly decreasing on $\{0, 1, \cdots, \frac{d-1}{2}\}$. They prove that $Y$ is not a frame if we choose $\Omega=m\Z_d$ and $l_i=m-1$, and give explicit formulas to construct $\Omega_0 \subset \Z_d-\Omega$ such that $Y \cup \{\mathbf{e}_i: i\in \Omega_0 \}$ is a frame. In this case, by counting the largest geometric multiplicity of eigenvalues of $\mathbf{A}$, we know that the cardinality of an admissible set $\Omega$  should be at least 2.  We use Proposition \ref{prop1} to give a complete characterization of all minimal admissible sets below. 

\begin{corollary}\label{gaussian}
Assume $d$ is odd and $\mathbf{a} \in \ell^2(\Z_d)$ so that $\mathbf{\hat{a}}$ is symmetric and strictly decreasing on $\{0, 1, \cdots, \frac{d-1}{2}\}$. Let  $\mathbf{A}$ be the circular convolution operator with the convolution kernel $\mathbf{a}$.  Suppose that  $\Omega=\{i_1, i_2\} \subset \Z_d$, then 
$$Y=\{\mathbf{e}_i, \mathbf{Ae}_i, \cdots, \mathbf{A}^{\frac{d-1}{2}} \mathbf{e}_i: i\in \Omega \} $$
is a frame for $\ell^2(\Z_d)$ if and only if $gcd(\lvert i_1-i_2 \rvert, d)=1$. 
\end{corollary}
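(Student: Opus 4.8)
The plan is to invoke Proposition~\ref{prop1}, which converts the frame property of $Y$ into conditions on the eigenspace projections, and then to translate those conditions into the arithmetic statement $\gcd(|i_1-i_2|,d)=1$. The first step is to pin down the spectral data of $\mathbf A$: writing the diagonal entries of $\mathbf{diag}(\hat{\mathbf a})$ as $\hat{\mathbf a}(k)$, the symmetry hypothesis gives $\hat{\mathbf a}(j)=\hat{\mathbf a}(d-j)$, while strict monotonicity on $\{0,\dots,\frac{d-1}{2}\}$ ensures that no further coincidences arise. Since $d$ is odd, $k=0$ is the only fixed point of $k\mapsto d-k$, so $\hat{\mathbf a}(0)$ is a simple eigenvalue with eigenspace $E_0=\mathrm{span}\{\mathbf e_0\}$ and, for each $j\in\{1,\dots,\frac{d-1}{2}\}$, the value $\hat{\mathbf a}(j)$ is a double eigenvalue with eigenspace $E_j=\mathrm{span}\{\mathbf e_j,\mathbf e_{d-j}\}$. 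Hence $M_{\mathbf A}=2$ and $N_{\mathbf A}=\frac{d+1}{2}$, which already forces $|\Omega|\ge 2$, consistent with the choice $|\Omega|=2$.

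Next I would compute the annihilator degrees to justify the exponents. Because every entry of $\mathbf f_i=\mathbf F_d\mathbf e_i$ equals $\frac{1}{\sqrt d}\omega_d^{ik}\neq 0$, any polynomial $p$ with $p(\mathbf{diag}(\hat{\mathbf a}))\mathbf f_i=\mathbf 0$ must vanish at every distinct eigenvalue; hence $r_i=N_{\mathbf A}=\frac{d+1}{2}$ and $r_i-1=\frac{d-1}{2}$. So the choice $l_i=\frac{d-1}{2}=r_i-1$ meets the hypothesis of part~(2) of Proposition~\ref{prop1} with equality. Combining part~(1) (necessity) and part~(2) (sufficiency, valid precisely because $l_i\ge r_i-1$), the set $Y$ is a frame if and only if $\{P_k(\mathbf f_{i_1}),P_k(\mathbf f_{i_2})\}$ is a frame for $E_k$ for every $k$. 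The one-dimensional space $E_0$ is free, since the zeroth entry of each $\mathbf f_i$ equals $\frac{1}{\sqrt d}\neq 0$, so $\{P_0(\mathbf f_{i_1}),P_0(\mathbf f_{i_2})\}$ always spans $E_0$; everything therefore reduces to linear independence inside each two-dimensional $E_j$.

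I would then read off $P_j(\mathbf f_i)$ in the basis $\{\mathbf e_j,\mathbf e_{d-j}\}$ as $\frac{1}{\sqrt d}(\omega_d^{ij},\omega_d^{-ij})$, using $\omega_d^{id}=1$. Writing $m=i_1-i_2$, the pair $\{P_j(\mathbf f_{i_1}),P_j(\mathbf f_{i_2})\}$ is a basis of $E_j$ exactly when the $2\times 2$ determinant $\omega_d^{i_1 j}\omega_d^{-i_2 j}-\omega_d^{-i_1 j}\omega_d^{i_2 j}=\omega_d^{mj}-\omega_d^{-mj}$ is nonzero, i.e. when $d\nmid mj$ (here I use that $d$ is odd, so $d\mid 2mj\iff d\mid mj$). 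The theorem thus reduces to the claim that $d\nmid mj$ for all $j\in\{1,\dots,\frac{d-1}{2}\}$ if and only if $\gcd(|m|,d)=1$.

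The heart of the matter, and where I would be most careful, is this last arithmetic equivalence. Setting $g=\gcd(m,d)$ and using $\gcd(m/g,d/g)=1$, one has $d\mid mj\iff (d/g)\mid j$. If $g=1$, the least positive multiple of $d/g=d$ is $d>\frac{d-1}{2}$, so no offending $j$ exists and all projection-frame conditions hold; by part~(2) of Proposition~\ref{prop1}, $Y$ is a frame. If $g>1$, then oddness of $d$ forces $g\ge 3$, so $j:=d/g\le d/3\le\frac{d-1}{2}$ lies in range and satisfies $(d/g)\mid j$, whence the conditions on $E_{d/g}$ fail; by part~(1) of Proposition~\ref{prop1}, $Y$ is not a frame. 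The only genuine subtlety is the elementary bound $d/g\le\frac{d-1}{2}$, which hinges on $d$ being odd so that $g\ge 3$; the eigenspace bookkeeping and the determinant computation are routine.
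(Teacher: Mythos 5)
Your proof is correct and follows essentially the same route as the paper: identify the eigenspaces $E_0=\mathrm{span}\{\mathbf e_0\}$ and $E_j=\mathrm{span}\{\mathbf e_j,\mathbf e_{d-j}\}$, verify the annihilator degree $\tfrac{d+1}{2}$, invoke Proposition~\ref{prop1}, and reduce everything to the nonvanishing of the $2\times 2$ determinants $\omega_d^{(i_1-i_2)j}-\omega_d^{-(i_1-i_2)j}$. The only difference is that you spell out the final arithmetic equivalence (including the use of oddness of $d$ and the bound $d/g\le\tfrac{d-1}{2}$) which the paper asserts without detail.
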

\begin{proof}
First, we  check that the degree of the $\mathbf{A}$-annihilator of $\mathbf{e}_{i_1}$ and $\mathbf{e}_{i_2}$ is $\tfrac{d+1}{2}$. By the symmetry and monotonicity condition of $ {\mathbf{\hat a}}$, we know that the orthogonal eigenspace projections of $\mathbf{diag(\hat{\mathbf{a}})}$ consist of $\{P_j: j=0,\cdots, \tfrac{d-1}{2}\}$,  where $P_j$ is the orthogonal projection onto the subspace spanned by $\{ \mathbf{e}_j, \mathbf{e}_{d-j}\}$ for $j=1,\cdots,\tfrac{d-1}{2}$ and $P_{0}$ is the projection onto the span of $\mathbf{e_0}$. By (2) of Proposition \ref{prop1}, it suffices to show the following $2 \times 2 $ matrix 
\begin{equation}\label{matrix}
\left [\begin{array}{cc} 
\omega_{d}^{i_1j} & \omega_{d} ^{i_2j}\\
\omega_{d}^{i_1(d-j)} & \omega_{d}^{i_2(d-j)}\\
\end{array}
\right]
\end{equation}
is invertible for $j=1,\cdots, \tfrac{d-1}{2}$. We compute their determinants, and know that  they are invertible if and only if $\omega_{d}^{(i_1-i_2)j} \neq 1$ for $j=1,\cdots, \frac{d-1}{2}$. Hence, it is equivalent to the condition $gcd(\lvert i_1-i_2 \rvert, d)=1$. 
\end{proof}
Recall that we denote by  $M_{\mathbf{A}}$ the largest geometric multiplicity of eigenvalues of $\mathbf{A}$. We denote the class of circular convolution operators whose eigenvalues are subject to the same largest geometric multiplicity $L$ 
$$\mathbf{\cA}_{L}=\{ \mathbf{A} \in \mathbb{C}^{d \times d}: \mathbf{Af}=\mathbf{a*f} \text{ for some $\mathbf{a} \in \ell^2(\Z_d)$, } M_{\mathbf{A}} =L \}.$$

By Proposition \ref{prop1}, an admissible set $\Omega \subset \Z_d$ for any  $\mathbf{A} \in \mathcal{A}_L$ must contain at least $L$ elements. Since the spectral projections among $\mathbf{A} \in \mathcal{A}_L$ could be very different, minimal admissible sets  for different $\mathbf{A}$ could be different. But can we find a minimal universal admissible set $\Omega$ with $\lvert \Omega \rvert=L$ for any $\mathbf{A} \in \mathbf{\cA}_{L}$ and then determine $l_i$ for $i \in \Omega$? It turns out  that this question is closely related to full spark frames in the sparse signal processing theory.  We will show this connection but first we need the following definition.  
\begin{definition}
\label{Fullsparkmatrix}
Let $\mathbf{B} \in \mathbb{C}^{M\times N}$. Then the spark of $\mathbf{B}$ is the size of the smallest linearly dependent subset of columns, i.e,
$$Spark(\mathbf{B})=min\{\lvert\lvert \mathbf{f} \rvert\rvert_0: \mathbf{Bf}=\mathbf{0}, \mathbf{f}\neq \mathbf{0}\}.$$
If $M\leq N$, $\mathbf{B}$ is said to be full spark if $Spark(\mathbf{B})=M+1$. Equivalently, $M\times N$ full spark matrices have the property that every $M \times M$ submatrix is invertible. 
\end{definition}

\begin{theorem}\label{fs1} Let $\Omega \subset \Z_d$ with $\lvert \Omega \rvert=L$. Recall that  $N_\mathbf{A}$  is the number of distinct eigenvalues of $\mathbf{A}$. Then, for any $\mathbf{A} \in \mathcal{A}_L$, 
\begin{equation}\label{spatiotemporal}
Y=\{ \mathbf{e_i}, \mathbf{A}\mathbf{e}_i,\cdots, \mathbf{A}^{N_\mathbf{A}-1}\mathbf{e}_i:  i\in \Omega\} 
\end{equation}
is a frame for $\ell^2(\Z_d)$ 
if and only if $(\mathbf{F}_{d})_{\Omega}$ is a $L \times d$ full spark matrix. 
\end{theorem}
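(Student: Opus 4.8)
The plan is to deduce the statement from Proposition \ref{prop1}, after first checking that the temporal depth $l_i = N_\mathbf{A}-1$ is always large enough. Since $\mathbf{A}$ is diagonalizable with exactly $N_\mathbf{A}$ distinct eigenvalues, its minimal polynomial is $\prod_{k=1}^{N_\mathbf{A}}(x-\lambda_k)$, of degree $N_\mathbf{A}$; the $\mathbf{A}$-annihilator of each $\mathbf{e}_i$ divides this polynomial, so $r_i \le N_\mathbf{A}$ and hence $l_i = N_\mathbf{A}-1 \ge r_i-1$ for every $i \in \Omega$. Consequently parts (1) and (2) of Proposition \ref{prop1} together say that, for a fixed $\mathbf{A} \in \mathcal{A}_L$, the set $Y$ in \eqref{spatiotemporal} is a frame for $\ell^2(\Z_d)$ if and only if $\{P_k(\mathbf{f}_i) : i\in\Omega\}$ is a frame for $E_k$ for every $k$. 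This reduces the theorem to understanding these per-eigenspace conditions simultaneously for all $\mathbf{A}\in\mathcal{A}_L$.

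Next I would translate each per-eigenspace condition into a statement about submatrices of $\mathbf{F}_d$. For a fixed $\mathbf{A}$, the eigenspace $E_k$ of $\mathbf{diag}(\mathbf{\hat a})$ is the coordinate subspace supported on the level set $\Lambda_k = \{j : \mathbf{\hat a}(j) = \lambda_k\}$, and $P_k(\mathbf{f}_i)$ is precisely the restriction to $\Lambda_k$ of $\mathbf{f}_i=\mathbf{F}_d\mathbf{e}_i$, the $i$th column of $\mathbf{F}_d$. Because $\mathbf{F}_d$ is symmetric, the matrix whose columns are $\{P_k(\mathbf{f}_i):i\in\Omega\}$ is exactly the transpose of the $L\times|\Lambda_k|$ submatrix of $(\mathbf{F}_d)_\Omega$ formed by the columns indexed by $\Lambda_k$. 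Hence $\{P_k(\mathbf{f}_i):i\in\Omega\}$ is a frame for $E_k$ (equivalently, spans the $|\Lambda_k|$-dimensional space $E_k$) if and only if those $|\Lambda_k|$ columns of $(\mathbf{F}_d)_\Omega$ are linearly independent. The key structural fact is that $M_\mathbf{A}=L$ forces $|\Lambda_k|\le L$ for every $k$.

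For the ``if'' direction I would assume $(\mathbf{F}_d)_\Omega$ is full spark, so that every collection of at most $L$ of its columns is linearly independent. Then for any $\mathbf{A}\in\mathcal{A}_L$ each level set $\Lambda_k$ has size at most $L$, so the corresponding columns of $(\mathbf{F}_d)_\Omega$ are independent; by the previous paragraph each $\{P_k(\mathbf{f}_i):i\in\Omega\}$ is a frame for $E_k$, and hence $Y$ is a frame. This handles all $\mathbf{A}\in\mathcal{A}_L$ at once.

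For the ``only if'' direction I would argue by contraposition and build an explicit worst-case operator. If $(\mathbf{F}_d)_\Omega$ is not full spark, I would choose a linearly dependent set of at most $L$ of its columns and enlarge it to a set $\Lambda\subset\{0,\dots,d-1\}$ of cardinality exactly $L$, which remains linearly dependent. I would then define $\mathbf{\hat a}$ to be a single constant on $\Lambda$ and to take $d-L$ further pairwise-distinct values, all distinct from that constant, off $\Lambda$; the associated convolution operator $\mathbf{A}$ then has one eigenvalue of geometric multiplicity $L$ and the rest simple, so $\mathbf{A}\in\mathcal{A}_L$ with $\Lambda$ as the level set of its repeated eigenvalue. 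For this $\mathbf{A}$ the columns of $(\mathbf{F}_d)_\Omega$ indexed by $\Lambda$ are dependent, so $\{P_k(\mathbf{f}_i):i\in\Omega\}$ fails to span $E_k$; by part (1) of Proposition \ref{prop1}, $Y$ is not a frame, contradicting universality. I expect this construction---realizing an arbitrary dependent column set as the repeated level set of an operator genuinely lying in $\mathcal{A}_L$---to be the main point, while the reduction via Proposition \ref{prop1} and the transpose/symmetry identification are routine.
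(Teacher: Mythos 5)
Your proposal is correct and follows essentially the same route as the paper: reduce to the per-eigenspace frame conditions via Proposition \ref{prop1} (using $r_i\le N_\mathbf{A}$ so that $l_i=N_\mathbf{A}-1$ suffices), identify $\{P_k(\mathbf{f}_i):i\in\Omega\}$ with a transposed $L\times|\Lambda_k|$ submatrix of $(\mathbf{F}_d)_\Omega$ via $\mathbf{F}_d=\mathbf{F}_d^T$, and for necessity realize an arbitrary (at most $L$-element) column set as a level set of some $\mathbf{A}\in\mathcal{A}_L$. Your explicit worst-case construction in the ``only if'' direction just makes precise the paper's remark that the level sets can realize all partitions of $\Z_d$ with largest block of size $L$.
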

\begin{proof} For every $\mathbf{A} \in \mathbf{\cA}_L$, we assume  its convolution kernel is $\mathbf{a}$. Recall Definition \ref{levelsets}, let $\{\Lambda_j: j=1,\cdots,N_\mathbf{A}\}$ be the level sets of $\mathbf{\hat{a}}$. Then we have $\max_{j=1,\cdots,N_\mathbf{A}} \lvert \Lambda_j \rvert=L$. Also let  $\{ P_{\Lambda_j}: j=1,\cdots, N_\mathbf{A}\}$ be the set of eigenspace projections of $\mathbf{diag(\hat{\mathbf{a}})}$, i.e,  $P_{\Lambda_j}$ is the  projection onto  the subspace $E_j$ spanned by $\{\mathbf{e}_l: l\in \Lambda_j \}$.\\

We first prove  the `` if " part.  Let $O_{\Omega}=\{\mathbf{f}_i=\mathbf{F}_d\mathbf{e}_i :  i\in \Omega\}$. We form a $|\Lambda_j| \times L$ matrix with column vectors given by $\{P_{\Lambda_j}\mathbf{f}_i: \mathbf{f}_i\in O_{\Omega} \}$. We claim its  rank is  $|\Lambda_j|$, which implies $\{P_{\Lambda_j}\mathbf{f}_i: \mathbf{f}_i\in O_{\Omega} \}$ is a frame  for $E_j$.  Using the fact that $\mathbf{F}_d=\mathbf{F}^T_d$, it is easy to see that the transpose of the submatrix built from $\{P_{\Lambda_j}\mathbf{f}_i: \mathbf{f}_i\in O_{\Omega} \}$ is a $L \times |\Lambda_j|$ submatrix of $(\mathbf{F}_{d})_{\Omega}$. 
Since $(\mathbf{F}_{d})_{\Omega}$ is a $L \times d$ full spark matrix, any $L \times |\Lambda_j|$ submatrix of $(\mathbf{F}_{d})_{\Omega}$ has rank $|\Lambda_j|$. The claim follows from  this observation. Finally,  we compute the degree of the $\mathbf{A}$-annihilator for $\mathbf{e}_{i}$ which equals to $N_\mathbf{A}$.  By (2) of  Proposition \ref{prop1}, $Y$ defined in \eqref{spatiotemporal} is a frame for $\ell^2(\Z_d)$. \\

Conversely, suppose that $Y$ defined in \eqref{spatiotemporal} is a frame for any $\mathbf{A} \in \mathbf{\cA}_L$. By (1) of Proposition \ref{prop1}, $\{ P_{\Lambda_j} \mathbf{f}_{i}: i\in \Omega \}$ is  a frame for $E_j$. Since the level sets  $\{\Lambda_j: j=1,\cdots, N_{\mathbf{A}}\}$ of a convolution kernel $\mathbf{\hat a}$ for $\mathbf{A} \in \mathbf{\cA}_L$ can have all possibilities of  disjoint partitions of $\Z_d$ satisfying $\max_j \lvert \Lambda_j \rvert =L$, and using the same embedding trick with the ``if" part,  we know that  any $L_1 \leq L$ column vectors of $(\mathbf{F}_{d})_{\Omega}$ must be linearly independent. In other words, $(\mathbf{F}_{d})_{\Omega}$ is a full spark matrix. 
\end{proof}

For example, let us consider the universal minimal constructions for the class of convolution operators $\mathcal{A}_2$. Assume $\Omega=\{i_1,i_2\} \subset \Z_d$. It is direct to check that $(\mathbf{F}_{d})_{\Omega}$ is a $2 \times d $ full spark matrix if and only if $gcd(| i_1-i_2 |,d)=1$. Thus, we get an immediate corollary which generalizes Corollary \ref{gaussian}.

\begin{corollary} Let $\Omega=\{i_1,i_2\} \subset \Z_d$. Then, for any $\mathbf{A} \in \mathcal{A}_2$, 
$$
Y=\{ \mathbf{e_i}, \mathbf{A}\mathbf{e}_i,\cdots, \mathbf{A}^{N_\mathbf{A}-1}\mathbf{e}_i:  i\in \Omega\} 
$$
is a frame for $\ell^2(\Z_d)$ if and only if $gcd(| i_1-i_2 |,d)=1$.
\end{corollary}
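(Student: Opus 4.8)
The plan is to derive this directly from Theorem \ref{fs1} by specializing to $L = 2$ and then verifying that the full spark condition for a two-row submatrix of $\mathbf{F}_d$ is exactly the stated gcd condition. Since $\mathbf{A} \in \mathcal{A}_2$ means $L = 2$, and $Y$ in the statement uses precisely the exponent $N_{\mathbf{A}} - 1$ appearing in Theorem \ref{fs1}, that theorem tells us that $Y$ is a frame for $\ell^2(\Z_d)$ if and only if $(\mathbf{F}_d)_\Omega$ is a $2 \times d$ full spark matrix. Thus everything reduces to characterizing when the two-row matrix $(\mathbf{F}_d)_\Omega$ has the property that every one of its $2 \times 2$ submatrices is invertible.

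Next I would write $(\mathbf{F}_d)_\Omega$ explicitly: its two rows are the $i_1$-th and $i_2$-th rows of $\mathbf{F}_d$, so its $k$-th column is $\tfrac{1}{\sqrt{d}}(\omega_d^{i_1 k}, \omega_d^{i_2 k})^T$ for $k = 0, \ldots, d-1$. For a pair of distinct column indices $j \neq k$, the corresponding $2 \times 2$ submatrix has determinant proportional to $\omega_d^{i_1 j + i_2 k} - \omega_d^{i_2 j + i_1 k}$, which vanishes precisely when $(i_1 - i_2)(j - k) \equiv 0 \pmod d$. Hence $(\mathbf{F}_d)_\Omega$ is full spark if and only if $(i_1 - i_2)(j - k) \not\equiv 0 \pmod d$ for every pair of distinct $j, k \in \{0, \ldots, d-1\}$.

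Finally I would observe that as $j, k$ range over distinct elements of $\{0, \ldots, d-1\}$, the differences $j - k$ realize every nonzero residue modulo $d$. Therefore the full spark condition is equivalent to $(i_1 - i_2) m \not\equiv 0 \pmod d$ for all $m \in \{1, \ldots, d-1\}$, i.e.\ to $i_1 - i_2$ being a unit in $\Z_d$, which is exactly $gcd(|i_1 - i_2|, d) = 1$. Concretely, if $g := gcd(|i_1 - i_2|, d) > 1$, then taking $m = d/g \in \{1, \ldots, d-1\}$ exhibits a vanishing minor; conversely, if $g = 1$, the invertibility of $i_1 - i_2$ modulo $d$ rules out any vanishing minor, so the matrix is full spark. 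Combined with Theorem \ref{fs1}, this yields the claimed equivalence.

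The argument is essentially routine once Theorem \ref{fs1} is in hand, so I do not expect a genuine obstacle; indeed the gcd characterization of full spark for two rows of $\mathbf{F}_d$ was already flagged as ``direct to check'' in the paragraph preceding the statement. The only points requiring mild care are verifying that the difference set $\{j - k : j \neq k\}$ covers all nonzero residues modulo $d$ (so that no pair of columns is overlooked) and keeping track of the passage between $i_1 - i_2$ and $|i_1 - i_2|$ inside the gcd; both are bookkeeping matters rather than conceptual difficulties.
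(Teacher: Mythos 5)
Your proposal is correct and follows the same route as the paper: the paper derives this corollary immediately from Theorem \ref{fs1} with $L=2$, noting that it is ``direct to check'' that $(\mathbf{F}_d)_\Omega$ is a $2\times d$ full spark matrix if and only if $gcd(|i_1-i_2|,d)=1$, which is precisely the minor computation you carry out explicitly.
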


Note that $\mathbf{F}_d$ is a Vandemonde matrix, if we choose $\Omega=\{0,1,\cdots L-1\}$, then $(\mathbf{F}_{d})_{\Omega}$ is full spark (\cite[Lemma 2]{ACM12}). Thus, we have the following corollary.

\begin{corollary} Let $ 1\leq L \leq d$ be an integer and $\Omega=\{0,1,\cdots, L-1\}$. Then, for any $\mathbf{A} \in \mathcal{A}_L$, 
$$
Y=\{ \mathbf{e_i}, \mathbf{A}\mathbf{e}_i,\cdots, \mathbf{A}^{N_\mathbf{A}-1}\mathbf{e}_i:  i\in \Omega\} 
$$
is a frame for $\ell^2(\Z_d)$.
\end{corollary}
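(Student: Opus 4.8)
The plan is to invoke Theorem \ref{fs1} directly. Since $\Omega=\{0,1,\cdots,L-1\}$ has cardinality $L$ and $\mathbf{A}\in\mathcal{A}_L$, that theorem reduces the claim to verifying a single matrix-theoretic statement: that the $L\times d$ matrix $(\mathbf{F}_d)_\Omega$, formed from the first $L$ rows of the Fourier matrix, is full spark in the sense of Definition \ref{Fullsparkmatrix}, i.e.\ that every one of its $L\times L$ column-submatrices is invertible.

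First I would fix an arbitrary choice of $L$ column indices $0\le k_0<k_1<\cdots<k_{L-1}\le d-1$ and write down the associated $L\times L$ submatrix $\mathbf{M}$ of $(\mathbf{F}_d)_\Omega$. Its $(j,m)$ entry, for $j,m\in\{0,\cdots,L-1\}$, is $\tfrac{1}{\sqrt d}\,\omega_d^{\,jk_m}$. Setting $x_m:=\omega_d^{\,k_m}$, this entry equals $\tfrac{1}{\sqrt d}\,x_m^{\,j}$, so up to the global nonzero scalar $d^{-L/2}$ the matrix $\mathbf{M}$ is exactly the Vandermonde matrix $(x_m^{\,j})_{j,m}$ in the nodes $x_0,\cdots,x_{L-1}$.

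The key step is then to observe that these nodes are pairwise distinct: since $\omega_d=e^{-2\pi i/d}$ is a primitive $d$th root of unity and the $k_m$ are distinct elements of $\{0,\cdots,d-1\}$, the powers $\omega_d^{\,k_m}$ are distinct points on $S^1$. Consequently the Vandermonde determinant $\det\mathbf{M}=d^{-L/2}\prod_{m<m'}(x_{m'}-x_m)$ is nonzero, so $\mathbf{M}$ is invertible. As the choice of $L$ columns was arbitrary, $(\mathbf{F}_d)_\Omega$ is full spark, and the corollary follows from Theorem \ref{fs1}. This is precisely the content of the cited \cite[Lemma 2]{ACM12}, which is why the result is presented as an immediate corollary.

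I do not expect a genuine obstacle here; the only point requiring care is the orientation of the argument, namely checking that selecting the rows $\Omega$ and then an arbitrary block of $L$ columns really produces a Vandermonde matrix in the \emph{row} index rather than the column index. Here the exponents $\omega_d^{\,jk_m}$ run over the row exponent $j=0,\cdots,L-1$ for each fixed column node $x_m$, which is exactly what lets the Vandermonde determinant identity apply and distinguishes $\Omega=\{0,\cdots,L-1\}$ as a canonical universal choice.
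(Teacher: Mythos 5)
Your proposal is correct and follows exactly the paper's route: the paper likewise reduces the corollary to Theorem \ref{fs1} and the full-spark property of the first $L$ rows of $\mathbf{F}_d$, citing \cite[Lemma 2]{ACM12} for the Vandermonde argument that you spell out explicitly. The explicit verification (distinct nodes $\omega_d^{k_m}$ on $S^1$, nonvanishing Vandermonde determinant) is sound and fills in precisely what the citation covers.
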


Full spark matrices play an important role in applications like sparse signal processing, data transmission and phaseless reconstructions. There is a pressing need for deterministic constructions of full spark matrix (\cite{ACM12,  BAW} and the related work in \cite{SM08, SM09}) . In \cite{ACM12}, the authors have considered the problem of finding $\Omega$ such that $(\mathbf{F}_{d})_{\Omega}$ is a full spark matrix.  The following useful properties of full spark matrices can be found in \cite[Theorem 4]{ACM12}. 
\begin{theorem}\label{universalsampling} Let $\Omega \subset \Z_d$. If $(\mathbf{F}_d)_{\Omega}$ is a full spark matrix, then so is the submatrix of $\mathbf{F}_d$ built from the rows indexed by 
\begin{enumerate} 
\item any translations of $\Omega$, $\Omega+r=\{r+i: i \in \Omega\}$. 
\item $r\Omega=\{ri: i \in \Omega\}$ where $r$ is coprime to $d$. 
\item the complement of $\Omega$ in $\Z_d$. 
\end{enumerate}
\end{theorem}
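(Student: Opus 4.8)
The plan is to treat the three claims separately, in each case reducing to the hypothesis that every $L\times L$ minor of $(\mathbf{F}_d)_\Omega$ is nonzero, where $L=|\Omega|$. Throughout I write $\mathbf{F}_d[P,Q]$ for the submatrix of $\mathbf{F}_d$ with rows indexed by $P$ and columns indexed by $Q$, and I use the entry formula $(\mathbf{F}_d)_{j,k}=\tfrac{1}{\sqrt d}\omega_d^{jk}$ together with the symmetry $\mathbf{F}_d^T=\mathbf{F}_d$ and the unitarity $\mathbf{F}_d^{-1}=\mathbf{F}_d^\ast=\overline{\mathbf{F}_d}$. Claims (1) and (2) are pure reindexing arguments; claim (3) is the substantive one.

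For (1), fix a column set $S\subset\Z_d$ with $|S|=L$. For $j=r+i$ with $i\in\Omega$ and $k\in S$, the entry factors as $\tfrac{1}{\sqrt d}\omega_d^{(r+i)k}=\omega_d^{rk}\cdot\tfrac{1}{\sqrt d}\omega_d^{ik}$, so $\mathbf{F}_d[\Omega+r,\,S]=\mathbf{F}_d[\Omega,\,S]\,\mathbf{diag}\big((\omega_d^{rk})_{k\in S}\big)$, a right multiplication by an invertible diagonal matrix. Taking determinants, the minor of $(\mathbf{F}_d)_{\Omega+r}$ on columns $S$ differs from that of $(\mathbf{F}_d)_\Omega$ by the nonzero factor $\prod_{k\in S}\omega_d^{rk}$, hence is nonzero, and $S$ was arbitrary. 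For (2), since $\gcd(r,d)=1$ the map $k\mapsto rk$ is a bijection of $\Z_d$, so $r\Omega$ again has $L$ elements and $rS$ is an $L$-subset for every $L$-subset $S$. Relabeling the rows $ri\mapsto i$ (a permutation, affecting a determinant only by a sign) identifies $\mathbf{F}_d[r\Omega,\,S]$, whose $(ri,k)$ entry is $\tfrac{1}{\sqrt d}\omega_d^{i(rk)}$, with $\mathbf{F}_d[\Omega,\,rS]$ up to column ordering. Because $(\mathbf{F}_d)_\Omega$ is full spark and $|rS|=L$, this minor is nonzero, and hence so is the corresponding minor of $(\mathbf{F}_d)_{r\Omega}$.

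The main obstacle is (3), which is genuinely a statement about complementary minors rather than a reindexing. Here I would invoke Jacobi's identity for minors of the inverse: for an invertible matrix $M$ and index sets $P,Q$ with $|P|=|Q|$, the minor of $M^{-1}$ on rows $P$, columns $Q$ equals, up to a sign, $\det(M[Q^c,P^c])/\det M$. Applying this with $M=\mathbf{F}_d$, so that $M^{-1}=\overline{\mathbf{F}_d}$, and with $P=T^c$, $Q=\Omega$, where $T\subset\Z_d$ with $|T|=d-L$ is an arbitrary column set for the complement matrix, yields
$$\det\big(\overline{\mathbf{F}_d}[T^c,\Omega]\big)=\pm\frac{\det\big(\mathbf{F}_d[\Omega^c,T]\big)}{\det\mathbf{F}_d}.$$
The left-hand side equals $\overline{\det(\mathbf{F}_d[\Omega,T^c])}$ after using entrywise conjugation and the symmetry $\mathbf{F}_d^T=\mathbf{F}_d$; since $|T^c|=L$ and $(\mathbf{F}_d)_\Omega$ is full spark, $\det(\mathbf{F}_d[\Omega,T^c])\neq0$, so the left side is nonzero. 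As $\det\mathbf{F}_d\neq0$, I conclude $\det(\mathbf{F}_d[\Omega^c,T])\neq0$ for every such $T$, which is precisely the full spark property of $(\mathbf{F}_d)_{\Omega^c}$.

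The delicate points in (3) are the bookkeeping of the complements $Q^c,P^c$ and the sign convention in Jacobi's identity, and the verification that entrywise conjugation together with symmetry converts the minor $\det(\overline{\mathbf{F}_d}[T^c,\Omega])$ into (the conjugate of) a genuine maximal minor of $(\mathbf{F}_d)_\Omega$. Once that dictionary is set up correctly the nonvanishing transfers immediately, since $\det\mathbf{F}_d\neq0$ and the sign is irrelevant to whether a minor is zero.
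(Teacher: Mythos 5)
Your proof is correct. Note that the paper itself gives no proof of this statement: it is quoted verbatim from the reference \cite{ACM12} (Theorem 4 there), so there is nothing in the text to compare against line by line. Your argument is essentially the standard one used in that reference: parts (1) and (2) are the right reindexing reductions (right multiplication by an invertible diagonal matrix, and a row/column permutation induced by the bijection $k\mapsto rk$, respectively), and part (3) is the genuinely nontrivial one, which you correctly handle via Jacobi's complementary-minor identity applied to the unitary symmetric matrix $\mathbf{F}_d$, with $\mathbf{F}_d^{-1}=\overline{\mathbf{F}_d}$ converting the complementary minor into the conjugate of a maximal minor of $(\mathbf{F}_d)_\Omega$. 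Since only nonvanishing is at stake, the sign ambiguity in Jacobi's identity and the choice of which index set to complement (immaterial here by symmetry of $\mathbf{F}_d$) cause no trouble.
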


 In general, it is  challenging to give a characterization of finding deterministic full spark matrices from rows of $\mathbf{F}_d$. In \cite{CRT06}, the authors thought that the difficulty may come from the existence of nontrivial subgroups of $\Z_d$. In the 1920s, Chebotar$\ddot{e}$v gave the first characterization to the special case when $d$ is a prime. We can find an introduction and a proof of  this result  in the survey paper  \cite{SLH96}. 
\begin{theorem}[Chebotar$\ddot{e}$v] \label{chebotar}Let d be prime. Then every square submatrix of $\mathbf{F}_d$ is invertible.
\end{theorem}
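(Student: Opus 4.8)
The plan is to fix the chosen rows and columns and compute directly the valuation of the corresponding minor at the prime above $d$ in the cyclotomic field, showing that it is finite. Write $S=\{s_1,\dots,s_k\}$ and $T=\{t_1,\dots,t_k\}$ for the selected row and column indices, two $k$-element subsets of $\{0,\dots,d-1\}$. Up to the scalar $d^{-k/2}$, the corresponding $k\times k$ submatrix of $\mathbf{F}_d$ is $(\omega_d^{s_i t_j})_{i,j}$, so the assertion is equivalent to $\det(\omega_d^{s_i t_j})\neq 0$ for every such $S$ and $T$. I would work in the cyclotomic field $\mathbb{Q}(\omega_d)$, in which the prime $d$ is totally ramified with unique prime ideal $\mathfrak p=(1-\omega_d)$; let $v$ be the normalized valuation with $v(1-\omega_d)=1$, so that $v(d)=d-1$ and, for an ordinary integer $n$, $v(n)=0$ exactly when $d\nmid n$. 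The key point is that the minor vanishes if and only if its valuation is $+\infty$, so it suffices to produce a finite value.

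Setting $\pi=\omega_d-1$ (hence $v(\pi)=1$) and expanding each entry by the binomial theorem gives $\omega_d^{s_i t_j}=(1+\pi)^{s_i t_j}=\sum_{m\ge 0}\binom{s_i t_j}{m}\pi^m$. Using multilinearity of the determinant in the columns, I would rewrite
\begin{equation*}
\det(\omega_d^{s_i t_j})=\sum_{m_1,\dots,m_k\ge 0}\pi^{\,m_1+\cdots+m_k}\,\det\!\left[\binom{s_i t_j}{m_j}\right]_{i,j}.
\end{equation*}
For fixed $j$, the entry $\binom{s_i t_j}{m_j}$ is a polynomial of degree $m_j$ in $s_i$ with leading coefficient $t_j^{m_j}/m_j!$. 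A tuple $(m_1,\dots,m_k)$ can give a nonzero inner determinant only when the $k$ functions $s\mapsto\binom{s\,t_j}{m_j}$ are linearly independent; since these are polynomials of degrees $m_1,\dots,m_k$, independence forces the sorted degrees to dominate $(0,1,\dots,k-1)$, whence $m_1+\cdots+m_k\ge\binom{k}{2}$, with equality precisely for the permutations of $\{0,1,\dots,k-1\}$. Summing the staircase contributions and invoking the classical generalized-Vandermonde (Schur) evaluation, the coefficient of $\pi^{\binom{k}{2}}$ equals $\pm\,\Delta_S\,\Delta_T/\big(0!\,1!\cdots(k-1)!\big)$, where $\Delta_S=\prod_{i<i'}(s_{i'}-s_i)$ and $\Delta_T=\prod_{j<j'}(t_{j'}-t_j)$.

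It is here that the primality of $d$ is used: because the $s_i$, and likewise the $t_j$, are distinct elements of $\{0,\dots,d-1\}$, each difference $s_{i'}-s_i$ and $t_{j'}-t_j$ is a nonzero integer of absolute value below $d$, hence coprime to $d$, and every $m!$ with $m\le k-1\le d-1$ is coprime to $d$ as well. Thus the leading coefficient is a $\mathfrak p$-adic unit, so the $\pi^{\binom{k}{2}}$ term has valuation exactly $\binom{k}{2}$ while every remaining term has strictly larger valuation; consequently $v(\det(\omega_d^{s_i t_j}))=\binom{k}{2}<\infty$ and the minor is nonzero. I expect the main obstacle to be the combinatorial core of the second paragraph: rigorously proving that the lowest-order term in $\pi$ occurs at exponent $\binom{k}{2}$, that only the staircase multiset contributes at that order (so that tuples with a repeated exponent, which yield linearly dependent columns, are correctly discarded), and that the surviving coefficient is exactly the Vandermonde-over-superfactorial expression above. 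Once that evaluation is secured, the ramification facts about $\mathfrak p$ and the final unit computation are routine.
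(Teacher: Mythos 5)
Your argument is correct, but note that the paper does not actually prove Theorem \ref{chebotar}: it only states the result and refers the reader to the survey \cite{SLH96} for a proof. So there is no in-paper proof to match; what you have written is a self-contained proof in the style of the known $(1-\omega_d)$-adic arguments (essentially Frenkel's simplification of Chebotar\"ev's original induction, which is also the circle of ideas presented in \cite{SLH96}). The skeleton checks out: the expansion in $\pi=\omega_d-1$ is a finite sum since each $\binom{s_it_j}{m}$ vanishes for $m>s_it_j$; every inner determinant is a rational integer, so each term with $\sum m_j>\binom{k}{2}$ has valuation at least $\binom{k}{2}+1$; and the extraction of the order-$\binom{k}{2}$ coefficient is right. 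Two small points deserve tightening when you write this up. First, your phrase ``independence forces the sorted degrees to dominate $(0,1,\dots,k-1)$'' should be justified by the contrapositive: if the sorted exponents fail to dominate, then some $r$ of the columns are values of polynomials of degree at most $r-2$, hence lie in an $(r-1)$-dimensional space and are linearly dependent — this also quietly covers the case $0\in T$, where $\binom{xt_j}{m_j}$ has degree strictly less than $m_j$. Second, for the staircase coefficient itself, rather than invoking the Schur evaluation as a black box, it is cleaner to expand each $\binom{xt_j}{\sigma(j)}$ in powers of $x$ and observe that, among tuples of distinct exponents $e_j\le\sigma(j)$ with $\sum e_j\ge\binom{k}{2}$, only $e_j=\sigma(j)$ survives; summing over $\sigma$ then gives $\pm\Delta_S\Delta_T/\bigl(0!\,1!\cdots(k-1)!\bigr)$ directly, and the primality of $d$ makes this a $\mathfrak p$-adic unit exactly as you say.
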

Later \cite{ACM12} and \cite{BAW} generalized the techniques developed by Chebotar$\ddot{e}$v and gave a characterization to the special case when $d$ is a power of prime. We list the results in \cite{ACM12} here. To understand their results, we need the definition in \cite{ACM12} here. 
\begin{definition} We say a subset $\Omega \subset \Z_d$ is uniformly distributed over the divisors of $d$ if,
for every divisor $m$ of $d$, the $m$ cosets of $\langle m \rangle$ partition $\Omega$ into subsets, each of size 
$\lfloor \frac{\lvert \Omega \rvert}{m} \rfloor$ or $\lceil \frac{\lvert \Omega \rvert}{m} \rceil.$
\end{definition}
For example, when $d$ is prime, every subset of $\Z_d$ is uniformly distributed over the divisors of $d$. 
Then $\{0,1,\cdots,L-1\}$ is uniformly distributed over the divisors of $d$ for any $L\leq d$. The following characterization can be found in \cite[Theorem 9]{ACM12}. 
\begin{theorem} \label{fs}
Let $d$ be a prime power. We select rows indexed by $\Omega \subset \Z_d$ from $\mathbf{F}_d$ to build the submatrix $(\mathbf{F_d})_{\Omega}$. Then $(\mathbf{F_d})_{\Omega}$ is full spark if and only if $\Omega$ is uniformly distributed over the divisors of $d$.
\end{theorem}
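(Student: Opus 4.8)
The plan is to dualize the full-spark property into a statement about sparse null vectors of the Fourier transform and then induct on $k$, where $d=p^k$. By Definition \ref{Fullsparkmatrix}, $(\mathbf{F}_d)_\Omega$ is full spark exactly when every $|\Omega|$ of its columns are linearly independent. Writing the unnormalized transform as $\widehat{\mathbf{f}}(j)=\sum_{k}\mathbf{f}(k)\omega_d^{jk}=\sqrt{d}\,(\mathbf{F}_d\mathbf{f})(j)$ and reading a linear dependence among the columns indexed by a set $S$ as a nonzero vector $\mathbf{f}$ with $\operatorname{supp}(\mathbf{f})\subseteq S$, this is equivalent to the assertion that there is no nonzero $\mathbf{f}$ with $|\operatorname{supp}(\mathbf{f})|\le|\Omega|$ whose transform $\widehat{\mathbf{f}}$ vanishes identically on $\Omega$. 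I would prove the theorem in this dual form by induction on $k$. The base case $k=1$ ($d=p$ prime) is precisely Chebotar\"ev's theorem (Theorem \ref{chebotar}): every square submatrix of $\mathbf{F}_p$ is invertible, and, as already observed in the text, every subset of $\Z_p$ is uniformly distributed over the divisors of $p$.

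The engine of the induction is the subgroup lattice of $\Z_{p^k}$, which is the single chain $G_i=\langle p^i\rangle$ of order $p^{k-i}$ for $i=0,\dots,k$, with annihilator $G_i^{\perp}=G_{k-i}$. The structural reduction I would isolate is this: a function $\mathbf{f}$ whose support lies in the subgroup $G_{k-j}$ (of order $p^{j}$) has a transform $\widehat{\mathbf{f}}$ that is constant on the cosets of $G_{j}$, hence descends to a function on the quotient $\Z_{p^k}/G_j\cong\Z_{p^j}$; up to normalization and a fixed isomorphism, the induced correspondence $\mathbf{f}\mapsto\widehat{\mathbf{f}}$ is the smaller Fourier matrix $\mathbf{F}_{p^j}$. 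Consequently the condition $\widehat{\mathbf{f}}|_\Omega=0$ becomes $(\mathbf{F}_{p^j})_{\bar\Omega_j}\mathbf{u}=\mathbf{0}$, where $\bar\Omega_j\subseteq\Z_{p^j}$ is the set of $G_j$-cosets that meet $\Omega$ and $\mathbf{u}$ is the transferred vector; crucially this identification preserves support sizes. A short lemma to verify is that the uniform-distribution hypothesis on $\Omega\subseteq\Z_{p^k}$ is hereditary, descending to $\bar\Omega_j\subseteq\Z_{p^j}$, so that the inductive hypothesis may be applied to $(\mathbf{F}_{p^j})_{\bar\Omega_j}$.

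For necessity I argue by contrapositive. If $\Omega$ violates the uniform-distribution condition, the violation must occur at some proper divisor $m=p^j$ with $1\le j\le k-1$, since the condition is automatic at $m=1$ and at $m=d$; thus the $p^j$ cosets of $\langle m\rangle=G_j$ split $\Omega$ into counts that differ by at least two, so some coset is met at least twice and $|\bar\Omega_j|\le|\Omega|-1$. When moreover $\bar\Omega_j$ does not fill $\Z_{p^j}$, the matrix $(\mathbf{F}_{p^j})_{\bar\Omega_j}$ has rank at most $|\bar\Omega_j|\le|\Omega|-1$ and therefore possesses a null vector of support at most $|\bar\Omega_j|+1\le|\Omega|$; transferring it to a function supported on $G_{k-j}$ yields a nonzero $\mathbf{f}$ with $|\operatorname{supp}(\mathbf{f})|\le|\Omega|$ and $\widehat{\mathbf{f}}|_\Omega=0$, so $(\mathbf{F}_d)_\Omega$ is not full spark. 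The paper's own example $d=4$, $\Omega=\{1,3\}$, $\mathbf{f}=\mathbf{e}_0+\mathbf{e}_2$ is exactly this mechanism. The one configuration not settled by this elementary step is an imbalance in which every coset of $G_j$ is nonetheless met (so $\bar\Omega_j=\Z_{p^j}$ and the reduced matrix is invertible); that case, like sufficiency, is forced by the sharp uncertainty estimate described next.

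Sufficiency is the hard direction and carries the real content: assuming $\Omega$ uniformly distributed, I must exclude every sparse null vector, not only the subgroup-supported ones produced above. Here I would establish a sharp uncertainty principle for $\Z_{p^k}$ by descent along the chain $G_i$, in the spirit of the Cooley--Tukey splitting $x=x_0+px_1$ with $x_0\in\{0,\dots,p-1\}$ and $x_1\in\Z_{p^{k-1}}$: given $\mathbf{f}\ne\mathbf{0}$ with $|\operatorname{supp}(\mathbf{f})|\le|\Omega|$ and $\widehat{\mathbf{f}}|_\Omega=0$, the aim is to peel off one layer and manufacture a nonzero function on $\Z_{p^{k-1}}$ whose transform still vanishes on a uniformly distributed set and whose support is still bounded by the number of imposed zeros, contradicting the inductive hypothesis and, at the bottom of the recursion, Chebotar\"ev's theorem on $\Z_p$. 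The main obstacle is precisely this descent: controlling how the support of $\mathbf{f}$ can redistribute among the $p$ residue classes while the transform remains zero on the balanced set $\Omega$, that is, proving that uniform distribution is exactly the combinatorial condition ruling out sparse cancellation. This sharp prime-power uncertainty principle is the crux; granting it, the reduction of the preceding paragraphs together with Chebotar\"ev's theorem assemble into the stated equivalence.
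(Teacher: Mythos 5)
Your dual formulation is the right one (full spark of $(\mathbf{F}_d)_\Omega$ is equivalent to the absence of a nonzero $\mathbf{f}$ with $\|\mathbf{f}\|_0\le|\Omega|$ and $\widehat{\mathbf{f}}|_\Omega=0$), the base case via Theorem \ref{chebotar} is correct, and the transfer of subgroup-supported vectors to $\mathbf{F}_{p^j}$ is a legitimate and standard reduction that does yield the easy half of necessity. But the proposal has a genuine gap: the entire sufficiency direction, together with the one remaining configuration of necessity (imbalance with every coset of $\langle p^j\rangle$ still meeting $\Omega$), is delegated to a ``sharp uncertainty principle for $\Z_{p^k}$'' that you describe as a goal but never establish. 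In your dual formulation that principle \emph{is} the theorem, so beyond $k=1$ nothing has actually been proved. The descent step you propose --- peeling off one layer of the splitting $x=x_0+px_1$ while preserving both the support bound and the uniform distribution of the zero set --- is precisely where all of the difficulty sits, and you give no mechanism for controlling how the support of $\mathbf{f}$ redistributes among the $p$ residue classes; indeed you name this as ``the main obstacle'' yourself. An honest accounting is that you have reduced the theorem to an unproved statement of essentially equal strength.

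For comparison: the paper itself does not prove Theorem \ref{fs}; it quotes it as Theorem 9 of \cite{ACM12}. The argument there (and in the related work \cite{SM08}) is not an induction on $k$ but a direct generalization of Chebotar\"ev's determinant method using the arithmetic of the cyclotomic ring $\Z[\omega]$ with $\omega=e^{-2\pi i/p^k}$: the prime $p$ is totally ramified, $p$ is a unit times $(1-\omega)^{\phi(p^k)}$, and one computes a lower bound on the $(1-\omega)$-adic valuation of the determinant of an arbitrary square submatrix; uniform distribution of $\Omega$ over the divisors of $d$ is exactly the combinatorial condition guaranteeing that this valuation is strictly smaller than what vanishing of the determinant would force, whence the minor is nonzero. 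If you wish to salvage your inductive route, the missing lemma you must supply is the prime-power uncertainty estimate itself, and at present the cleanest known proofs of that estimate go through the valuation argument just described rather than through a support-redistribution induction.
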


As a consequence of Theorem \ref{fs} and Theorem \ref{chebotar}, we can state an immediate corollary.
\begin{corollary} Let $\Omega \subset \Z_d$ with $\lvert \Omega \rvert=L$ and 
$$Y=\{ \mathbf{e_i}, \mathbf{A}\mathbf{e}_i,\cdots, \mathbf{A}^{N_\mathbf{A}-1}\mathbf{e}_i:  i\in \Omega\} .$$
\begin{enumerate}
\item If $d$ is prime, then, for any $\mathbf{A} \in \mathcal{A}_L$, 
$Y$ is a frame for $\ell^2(\Z_d)$. 
\item Assume $d$ is some power of prime. Then,  for any $\mathbf{A} \in \mathcal{A}_L$, $Y$ is a frame for $\ell^2(\Z_d)$ if and only if  $\Omega$ is uniformly distributed over the divisors of $d$.
\end{enumerate}
\end{corollary}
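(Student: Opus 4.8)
The plan is to recognize that this corollary contains no new analytic content: it is obtained by chaining the full spark characterization of Theorem \ref{fs1} with the two deterministic full spark results, Theorem \ref{chebotar} and Theorem \ref{fs}. The crucial reduction is supplied by Theorem \ref{fs1}: for every $\mathbf{A} \in \mathcal{A}_L$, the set $Y=\{\mathbf{e}_i, \mathbf{A}\mathbf{e}_i,\cdots, \mathbf{A}^{N_\mathbf{A}-1}\mathbf{e}_i : i\in \Omega\}$ is a frame for $\ell^2(\Z_d)$ if and only if the $L\times d$ submatrix $(\mathbf{F}_d)_\Omega$ is full spark. Consequently, both parts of the statement reduce entirely to the question of when $(\mathbf{F}_d)_\Omega$ is full spark, and the hypothesis $\mathbf{A}\in\mathcal{A}_L$ with $|\Omega|=L$ is precisely what makes Theorem \ref{fs1} an ``if and only if'' across the whole class, so that a single condition on $\Omega$ governs all operators in $\mathcal{A}_L$ simultaneously.

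For part (1), I would first recall from Definition \ref{Fullsparkmatrix} that $(\mathbf{F}_d)_\Omega$ is full spark exactly when every $L\times L$ submatrix is invertible. Each such submatrix arises by selecting $L$ columns, hence it is a square submatrix of $\mathbf{F}_d$ whose rows are indexed by $\Omega$. When $d$ is prime, Theorem \ref{chebotar} guarantees that \emph{every} square submatrix of $\mathbf{F}_d$ is invertible, so in particular all of these $L\times L$ submatrices are invertible and $(\mathbf{F}_d)_\Omega$ is full spark for any $\Omega$ with $|\Omega|=L$. Feeding this back into Theorem \ref{fs1} yields that $Y$ is a frame for every $\mathbf{A}\in\mathcal{A}_L$, with no constraint on $\Omega$ beyond its cardinality. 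For part (2), when $d$ is a prime power, I would invoke Theorem \ref{fs} directly, which states that $(\mathbf{F}_d)_\Omega$ is full spark if and only if $\Omega$ is uniformly distributed over the divisors of $d$; combining this equivalence with the reduction from Theorem \ref{fs1} gives that $Y$ is a frame for every $\mathbf{A}\in\mathcal{A}_L$ if and only if $\Omega$ is uniformly distributed over the divisors of $d$.

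There is no genuine obstacle in this argument; the substance lives in the cited theorems. The only point demanding care is bookkeeping: matching the definition of full spark (every maximal square submatrix invertible) against the ``every square submatrix invertible'' phrasing of Theorem \ref{chebotar}, and checking that the level set partition underlying $\mathcal{A}_L$ forces the full spark condition to be both necessary and sufficient uniformly over the class, exactly as established in the proof of Theorem \ref{fs1}.
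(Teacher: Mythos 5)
Your proposal is correct and follows exactly the route the paper intends: the corollary is stated as an immediate consequence of combining the full spark characterization of Theorem \ref{fs1} with Chebotar\"ev's theorem (for $d$ prime) and Theorem \ref{fs} (for $d$ a prime power), which is precisely your argument. The paper gives no separate written proof, and your bookkeeping of the reduction to the full spark property of $(\mathbf{F}_d)_\Omega$ matches the intended one.
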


The proof of Theorem \ref{fs} in \cite{ACM12} also shows the necessity of having row indices uniformly
distributed over the divisors of $d$ in general. However, it remains open to prove that it is a sufficient condition for arbitrary $d$. So far, we only know the selections of consecutive rows and their variants given in  Theorem \ref{universalsampling} produce full spark matrices for any  $\mathbf{F}_d$. \\


In the spatiotemporal sampling problem of the heat diffusion process, a common approach is to place the sensors indexed by $\Omega$ in a periodic nonuniform way (\cite{Lv09,JP, JAYM11, JM13} and reference therein). Let $m$ be a  positive divisor of $d$ such that $d=mJ$, and we also investigate when a union of periodic sets $\Omega=\{ m\Z_d+r, r\in W \subset \Z_m\}$ is an admissible set for  an operator $\mathbf{A}$
defined by a convolution kernel $\mathbf{a} \in \ell^2(\Z_d)$. We ask  similar questions in this case: what is the minimal cardinality of $W$ such that $\Omega$ is an admissible set for $\mathbf{A}$ and how to find such $W$. It turns out that the answers are also related to the geometric multiplicity of eigenvalues of $\mathbf{A}$. We define 
$\mathbf{a}_k \in \ell^2(\Z_m)$ by 
\begin{equation}\label{kslice}
\mathbf{a}_k=[{\mathbf{\hat a}}(k), {\mathbf{\hat a}}(k+J),\cdots, {\mathbf{\hat a}}(k+(m-1)J) ]^T,
\end{equation} 
and $\mathbf{D}_k \in \mathbb{C}^{m\times m}$ by
\begin{equation}
\mathbf{D}_k=\mathbf{diag}(\mathbf{a}_k) 
\end{equation} for $k=0,\cdots, J-1.$ Let$$
\mathcal{B}_{L}=\{ \mathbf{A} \in \mathbb{C}^{d \times d}: \mathbf{Af}=\mathbf{a*f} \text{ for some $\mathbf{a} \in \ell^2(\Z_d)$, } \max_{k=0,\cdots,J-1}M_{\mathbf{D}_k} =L \}.$$


We are going to show that for $\mathbf{A} \in \mathcal{B}_L$, the minimal cardinality of $W$ is $L$ and we also find universal admissible unions of periodic sets $\Omega$  with $\lvert W \rvert =L$ for all  $\mathbf{A} \in \mathcal{B}_L$. 

\begin{theorem} \label{main2}Assume $d$ is a positive integer and $m$ is a positive divisor of $d$ such that $d=mJ (m>1)$. Assume that $W\subset \Z_m$ consists of $L<m$ elements. Let $\Omega=\{m\Z_d+r: r\in W\} \subset \Z_d$ and 
\begin{equation}\label{spatiotemporal2}
Y=\{\mathbf{e}_i, \mathbf{A}\mathbf{e}_{i}, \cdots, \mathbf{A}^{m-1}\mathbf{e}_{i}:  i\in \Omega \}.
\end{equation}
\begin{enumerate}
\item If $Y$ is a frame of $\ell^2(\Z_d)$ for  an operator $\mathbf{A} \in \mathcal{B}_L$, then $\lvert W \rvert \geq L$. Moreover, if $\Omega$  is admissible  for  an operator $\mathbf{A} \in  \mathcal{B}_L$, then $\lvert W \rvert \geq L$.
\item $Y$ is a frame for $\ell^2(\Z_d)$ for any $\mathbf{A} \in \mathcal{B}_L$ if and only if the submatix $(\mathbf{F}_{m})_{W}$ is a $\lvert W \rvert \times m$ full spark matrix. 
\end{enumerate}
\end{theorem}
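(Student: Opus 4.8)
The plan is to pass to the frequency domain and then decouple the problem into $J$ independent sub-problems, one for each ``frequency slice''. As in the proof of Proposition \ref{prop1}, applying the unitary $\mathbf{F}_d$ reduces the frame property of $Y$ to that of $\tilde{Y}=\{\mathbf{diag}(\hat{\mathbf a})^l \mathbf{f}_i : i\in\Omega,\ 0\le l\le m-1\}$, with $\mathbf{f}_i=\mathbf{F}_d\mathbf{e}_i$. The key computation I would carry out first is to write each frequency index as $\ell=k+sJ$ with $0\le k\le J-1$, $0\le s\le m-1$, so that the $k$-th slice carries exactly the diagonal $\mathbf{a}_k$ (equivalently $\mathbf{D}_k$), and to show that for $i=mt+r$ (with $r\in W$, $0\le t\le J-1$) the restriction of $\mathbf{f}_i$ to the $k$-th slice is the scalar multiple $\tfrac{1}{\sqrt J}\,\omega_J^{kt}\,\omega_d^{kr}\,\mathbf{g}_r$ of the column $\mathbf{g}_r=\mathbf{F}_m\mathbf{e}_r$. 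This is the heart of the matter: the whole coset $m\Z_d+r$ collapses, in each slice, onto a single direction $\mathbf{g}_r$.

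Next I would exploit this to decouple the completeness condition. Writing $\mathbf{g}^{(k)}(s)=\mathbf{g}(k+sJ)$, a direct computation gives
\begin{equation*}
\langle \mathbf{g}, \mathbf{diag}(\hat{\mathbf a})^l \mathbf{f}_i\rangle=\tfrac{1}{\sqrt J}\sum_{k=0}^{J-1}\omega_J^{-kt}\,\omega_d^{-kr}\,\langle \mathbf{g}^{(k)}, \mathbf{D}_k^l\mathbf{g}_r\rangle .
\end{equation*}
For fixed $r$ and $l$, requiring this to vanish for all $t=0,\dots,J-1$ is a discrete Fourier transform in the pair $(k,t)$; since $\mathbf{F}_J$ is invertible it forces $\langle \mathbf{g}^{(k)},\mathbf{D}_k^l\mathbf{g}_r\rangle=0$ for every $k$. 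Hence $\mathbf{g}\perp\tilde Y$ if and only if, for each $k$, $\mathbf{g}^{(k)}$ is orthogonal to $\{\mathbf{D}_k^l\mathbf{g}_r: r\in W,\ 0\le l\le m-1\}$ in $\mathbb{C}^m$. Equivalently, $Y$ is a frame for $\ell^2(\Z_d)$ if and only if for every slice $k$ the family $\{\mathbf{A}_k^l\mathbf{e}_r: r\in W,\ 0\le l\le m-1\}$ is a frame for $\ell^2(\Z_m)$, where $\mathbf{A}_k=\mathbf{F}_m^{\ast}\mathbf{D}_k\mathbf{F}_m$ is the convolution operator on $\ell^2(\Z_m)$ with frequency response $\mathbf{a}_k$. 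This exhibits the periodic problem as $J$ copies of the single-variable problem on $\ell^2(\Z_m)$ with sampling set $W$, to which Proposition \ref{prop1} and Theorem \ref{fs1} apply (note $\mathbf{g}_r$ has all entries nonzero, so its $\mathbf{D}_k$-annihilator has degree $N_{\mathbf{D}_k}\le m$, and the uniform exponent bound $m-1$ is always sufficient).

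With the reduction in place, both assertions follow. For the lower bound in (1), since $\mathbf{A}\in\mathcal{B}_L$ there is a slice $k_0$ and an eigenvalue $c_0$ of $\mathbf{D}_{k_0}$ whose level set $\Lambda_{c_0,k_0}$ has size $L$; projecting the (hypothetical) frame onto the corresponding eigenspace and using that every coset collapses to a multiple of $\mathbf{g}_r$, the projected vectors span a space of dimension at most $|W|$, so $L=|\Lambda_{c_0,k_0}|\le|W|$. This argument uses only Proposition \ref{prop1}(1) and the slice collapse, so it applies equally to arbitrary admissible $(\Omega,l_i)$, giving the ``moreover'' part. For (2), Proposition \ref{prop1} reduces slice-$k$ completeness to the requirement that, for each level set $\Lambda_{c,k}$, the submatrix of $\mathbf{F}_m$ with rows $W$ and columns $\Lambda_{c,k}$ has full column rank. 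If $(\mathbf{F}_m)_W$ is full spark then any $|\Lambda_{c,k}|\le M_{\mathbf{D}_k}\le L=|W|$ of its columns are independent, so every slice is complete for every $\mathbf{A}\in\mathcal{B}_L$. Conversely, if $(\mathbf{F}_m)_W$ is not full spark, I would pick a linearly dependent column set $S$ with $|S|\le L$ and construct $\hat{\mathbf a}$ whose slices realize $S$ as a level set in one slice (forcing that slice to fail the rank condition) and a level set of size $L$ in another slice (so that $\max_k M_{\mathbf{D}_k}=L$, i.e.\ $\mathbf{A}\in\mathcal{B}_L$); for this $\mathbf{A}$, $Y$ is not a frame.

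The main obstacle I anticipate is the slice-collapse computation together with the Fourier-in-$t$ decoupling: getting the index bookkeeping right for $\ell=k+sJ$ and $i=mt+r$, verifying that the $s$-dependence of $\mathbf{f}_i$ on each slice is exactly $\mathbf{g}_r=\mathbf{F}_m\mathbf{e}_r$ up to a scalar, and confirming that summing over the full coset (all $t$) is precisely what produces the clean $\mathbf{F}_J$-invertibility step. Once that lemma is secured, the remainder is a slice-by-slice application of the already-established single-variable theory, and the full spark characterization transfers from Theorem \ref{fs1} with $d$ replaced by $m$ and $\Omega$ by $W$.
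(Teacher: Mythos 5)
Your proposal is correct, and it lands on exactly the same slice-wise criterion as the paper --- namely that for each frequency slice $k$ and each level set $\Lambda_{j,k}$ of $\mathbf{a}_k$, the vectors $\{P_{\Lambda_{j,k}}\mathbf{F}_m\mathbf{e}_r : r\in W\}$ must span the corresponding eigenspace --- but it gets there by a different route. The paper starts from the subsampled data $S_{m\Z_d+r}((\mathbf{A}^s)^*\mathbf{f})$, applies the Poisson summation formula and the convolution theorem to obtain the system $m\mathbf{y}_k^{(r)}=\mathbf{A}_{m,k}\mathbf{diag}(\mathbf{h}_{r,k})\mathbf{f}_k$, and then analyzes $\bigcap_{r\in W}\ker(\mathbf{A}_{m,k}\mathbf{diag}(\mathbf{h}_{r,k}))$ by exhibiting $\{P_{\Lambda_{j,k}}\mathbf{b}_r\}$ as an orthogonal basis of each orthogonal complement, exploiting the Vandermonde structure of $\mathbf{A}_{m,k}$. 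You instead compute directly that $\mathbf{F}_d\mathbf{e}_{mt+r}$ restricted to the $k$-th slice is $\tfrac{1}{\sqrt J}\omega_J^{kt}\omega_d^{kr}\,\mathbf{F}_m\mathbf{e}_r$ (your index bookkeeping checks out: $\omega_d^{(k+sJ)(mt+r)}=\omega_J^{kt}\omega_d^{kr}\omega_m^{sr}$), and then use invertibility of $\mathbf{F}_J$ in the variable $t$ to decouple the orthogonality conditions slice by slice. The payoff of your version is that it reduces the periodic problem to $J$ literal instances of the single-variable problem on $\ell^2(\Z_m)$ with sampling set $W$ and operator $\mathbf{A}_k=\mathbf{F}_m^*\mathbf{D}_k\mathbf{F}_m$, so Proposition \ref{prop1} and Theorem \ref{fs1} apply verbatim with $d$ replaced by $m$ (and your observation that $\mathbf{g}_r$ has full support, so the exponent $m-1$ always suffices, is the right justification); the paper's version is more computational but self-contained and makes the Vandermonde rigidity behind the ``moreover'' claim in (1) explicit. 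One small point to tidy in your converse for (2): a minimal dependent column set $S$ may have $|S|<L$, and a single slice of length $m$ need not accommodate both $S$ and a disjoint level set of size $L$; either enlarge $S$ to a dependent set of size exactly $L$ (any superset of a dependent set is dependent, and then the rank deficiency persists), or place the size-$L$ level set in a different slice as you suggest, which requires $J\ge 2$.
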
\label{periodic}
\begin{proof}
\begin{enumerate}
\item  Assume $\mathbf{A} \in \mathcal{B}_L$ and its convolution kernel is $\mathbf{a}$. Let $\mathbf{f} \in \ell^2(\Z_d)$ and assume that it is orthogonal to $Y$. 

 For a fixed $r \in W$, recall the definition of subsampling operator defined in Subsection \ref{notation}, we let $\mathbf{y}_{s,r} =S_{m\Z_d+r}((\mathbf{A}^{ s})^*\mathbf{f})$ for $s=0,1, \cdots, m-1$.  If we take the discrete Fourier transform on $\mathbf{y}_{s,r}$, and use Poisson Summation Formula and the convolution theorem, then we can obtain identities 
\begin{equation}\label{psf}
{\mathbf{\hat y}}_{s,r}(k)=\frac{1}{m}\sum\limits_{l=0}^{m-1} e^{\frac{2\pi ir(k+Jl)}{d}}  { \overline{\mathbf{ \hat a}(k+Jl)^{s}}} {\mathbf{\hat f}}(k+Jl), k=0,\cdots, J-1
\end{equation} for $s=0,\cdots,m-1$. For  $k=0,1,\cdots, J-1$, we define  
$$
\mathbf{A}_{m,k} =\left [\begin{smallmatrix}
1 & 1 &\cdots & 1\\
\overline{\mathbf{\hat a} (k)} & \overline{\mathbf{\hat a}(k+J)}&\cdots & \overline{\mathbf{\hat a} (k+(m-1)J)}   \\
\vdots &\vdots & &\vdots \\
\overline{\mathbf{\hat a}^{m-1}(k)} & \overline{\mathbf{\hat a}^{m-1}(k+J)}&\cdots & \overline{\mathbf{\hat a}^{m-1}(k+(m-1)J)}
\end{smallmatrix}
\right] \text{ and } \mathbf{h}_{r,k}=\left [\begin{smallmatrix}
e^{\frac{2\pi irk}{d}} \\
e^{\frac{2\pi ir(k+J)}{d}}  \\
\vdots  \\
e^{\frac{2\pi ir(k+(m-1)J)}{d}} \end{smallmatrix}
\right].
$$ We also define 
$$
\mathbf{y}_{k}^{(r)}=\left [\begin{smallmatrix}
\mathbf{ \hat y}_{0,r}(k) \\
\mathbf{\hat y}_{1,r}(k) \\
\vdots  \\
\mathbf{\hat y}_{m-1,r}(k) \end{smallmatrix}
\right] \text { and } \mathbf{f}_{k}=\left [\begin{smallmatrix}
\mathbf{\hat f}(k) \\
\mathbf{\hat f}(k+J) \\
\vdots  \\
\mathbf{\hat f}(k+(m-1)J) \end{smallmatrix}
\right]. 
$$ For a fixed $k$ and $r$, we put identities  \eqref{psf} with $s=0,1,\cdots,m-1$ into a matrix equation  and obtain 
\begin{equation}
m\mathbf{y}_{k}^{(r)}={\mathbf{A}_{m,k}}\mathbf{diag}(\mathbf{h}_{r,k})\mathbf{f}_{k}.  
\end{equation}By the assumption that  $\mathbf{f}$ is orthogonal to ${Y}$, for each $s$, $\mathbf{y}_{s,r}=\mathbf{0}$ and hence $ \mathbf{\hat y}_{s,r}=\mathbf{0}$. So $\mathbf{y}_k^{(r)}=\mathbf{0}$, we conclude that $\mathbf{f}_{k} \in \ker(\mathbf{A}_{m,k}\mathbf{diag}(\mathbf{h}_{r,k}))$.  Let $r$ take values in all elements of $W$ and $k=0,1,\cdots,J-1$, we see that ${\mathbf{f}}=\mathbf{0}$ if and only if 
\begin{equation}\label{equation4}
\mathop{\bigcap}\limits_{r \in W} \ker(\mathbf{A}_{m,k}\mathbf{diag}(\mathbf{h}_{r,k}))=\{\mathbf{0}\}\end{equation}
for $k=0,\cdots J-1$. Let $\ker(\mathbf{A}_{m,k})^{\bot}$ denote the orthogonal complement of $\ker(\mathbf{A}_{m,k})$ in $\ell^2(\Z_m)$, using the basic linear algebra,  we know that showing \eqref{equation4}  is equivalent to showing 
\begin{equation}\label{equation5}
\sum \limits_{r \in W} (\ker(\mathbf{A}_{m,k}\mathbf{diag}(\mathbf{h}_{r,k})))^{\bot}=\ell^2(\Z_m).
\end{equation}
for  $k=0,\cdots J-1$. 

Recall the definition of $\mathbf{a}_k$ for  the kernel $\mathbf{a}$ in \eqref{kslice}, we let  $\{\Lambda_{j,k}: j=1,\cdots, n_k\}$ be the level sets of $\mathbf{a}_k$. Note that  its complex conjugate  $\overline{\mathbf{a}_k}$ has the same level sets with $\mathbf{a}_k$.  Let $\{P_{\Lambda_{j,k}}: j=1,\cdots, n_k \}$ be the orthogonal projections determined by $\{\Lambda_{j,k}: j=1,\cdots, n_k\}$, i.e,  $P_{\Lambda_{j,k}}$ is the orthogonal projection onto the subspace of $\ell^2(\Z_m)$ spanned by $\{\mathbf{e}_i \in \ell^2(\Z_m): i\in \Lambda_{j,k}\}$.  Let  $\mathbf{v}=[1,1,\cdots,1]^T \in \ell^2(\Z_m)$, observing that $\mathbf{A}_{m,k}$ is a Vandermonde matrix, it is not difficult to see that $\{P_{\Lambda_{j,k}}\mathbf{v}: j=1,\cdots,n_k\}$ is an orthogonal basis for $\ker(\mathbf{A}_{m,k})^{\bot}$. Next, using the relation 
$$\ker(\mathbf{A}_{m,k}\mathbf{diag}(\mathbf{h}_{r,k}))^{\bot}=\mathbf{diag}^*(\mathbf{h}_{r,k})\ker(\mathbf{A}_{m,k})^{\bot},$$
we let $\mathbf{b}_r=\mathbf{F}_m\mathbf{e}_r$ for $r \in W$ and then  we can see that $\{P_{\Lambda_{j,k}}\mathbf{b}_r: j=1,\cdots,n_k\}$ is an orthogonal basis for $\ker(\mathbf{A}_{m,k}\mathbf{diag}(\mathbf{h}_{r,k}))^{\bot}$. Hence, for each $k$, showing \eqref{equation5} is equivalent to showing 
\begin{equation}\label{equation6}
\{P_{\Lambda_{j,k}}\mathbf{b}_r: r \in W, j=1,\cdots,n_k\}
\end{equation} is complete on $\ell^2(\Z_m)$.  Note that $\{P_{\Lambda_{j,k}}: j=1,\cdots, n_k \}$ is a set of pairwise orthogonal projections, \eqref{equation6} is true if and only if 
$$\{P_{\Lambda_{j,k}}\mathbf{b}_r: r \in W\}$$ is complete on the range space $E_{j,k}$ of $P_{\Lambda_{j,k}}$ for $j=1,\cdots, n_k$.  Now let us summarize what we have proved: 
assume that $ \mathbf{f} \in \ell^2(\Z_d)$ and it is orthogonal to $Y$, then $\mathbf{f}=\mathbf{0}$ if and only if for each $k=0,1,\cdots, J-1$,
\begin{equation}\label{equation7}
\{P_{\Lambda_{j,k}}\mathbf{b}_r: r \in W \}
\end{equation}
is complete on the range space  $E_{j,k}$ of $P_{\Lambda_{j,k}}$ for $j=1,\cdots, n_k$.  By the definition of $\mathcal{B}_L$,  $\max_{j,k} dim(E_{j,k})=L$. Given the condition that $Y$ is a frame for $\ell^2(\Z_d)$, we conclude that  $\rvert W\lvert \geq L$. \\

We point out that the condition $l_i=m-1$ is not  essential here; indeed, we prove if $\Omega=\{m\Z_d+r, r\in W\}$ is an admissible set for an operator  $\mathbf{A} \in \mathcal{B}_L$, then $\rvert W\lvert \geq L$.  Since if $Y$ defined in $\eqref{spatiotemporal2}$ is not a frame for $\ell^2(\Z_d)$, then no matter how large we increase each $l_i$ for $i \in \Omega$, the  new obtained $Y$  will never be a frame. This fact follows from the special structure of Vandermonde matrix.

\item  Using the characterization summarized in \eqref{equation7} and the same argument as in the proof of Theorem \ref{fs1}, Y defined in $\eqref{spatiotemporal2}$ is a frame for any $\mathbf{A} \in \mathcal{B}_L$ if and only if $(\mathbf{F}_{m})_{W} $ is full spark.

\end{enumerate}

 \end{proof}

\begin{remark} Proposition 3.1 in \cite{ADK12} says that $W=\{0\}$ will be an admissible set for all $\mathbf{A} \in \mathcal{B}_1$, which can be viewed as  this theorem's special case.  In fact, this theorem shows that any
$W \subset \Z_m$ with $\lvert W\rvert=1$ is an admissible set for $\mathcal{B}_1$.  
\end{remark}

 As an immediate corollary, we get 

\begin{corollary} Suppose we have the same settings with Theorem \ref{main2}. 
\begin{enumerate}
\item If $W=\{0,1,\cdots, L-1\}$, then for any $\mathbf{A} \in \mathcal{B}_{L}$, $Y$ is a frame for $\ell^2(\Z_d)$. 
\item If m is prime, then for any $\mathbf{A} \in \mathcal{B}_{L}$, $Y$ is a frame for $\ell^2(\Z_d)$. 
\item If $m$ is a  power of prime and $W$ is uniformly distributed over the divisor of $m$,  then  for any  $\mathbf{A} \in \mathcal{B}_{L}$,  $Y$ is a frame for $\ell^2(\Z_d)$.
\end{enumerate}
\end{corollary}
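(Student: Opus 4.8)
The plan is to reduce all three parts to the full-spark criterion furnished by part (2) of Theorem \ref{main2}. Under the standing hypotheses ($d = mJ$ with $m > 1$, $W \subset \Z_m$ with $|W| = L < m$, and $\Omega = \{m\Z_d + r : r \in W\}$), that theorem asserts that the sequence $Y$ in \eqref{spatiotemporal2} is a frame for $\ell^2(\Z_d)$ \emph{for every} $\mathbf{A} \in \mathcal{B}_L$ precisely when $(\mathbf{F}_m)_W$ is an $L \times m$ full spark matrix. Thus for each of the three cases it suffices to verify that the row-selection $W$ produces a full spark submatrix of the $m$-point Fourier matrix, and the corollary then follows at once by invoking Theorem \ref{main2}(2).

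For part (1), I would exploit the Vandermonde structure of $\mathbf{F}_m$: taking $W = \{0, 1, \ldots, L-1\}$ selects $L$ consecutive rows, and any $L$ columns indexed by $j_1 < \cdots < j_L$ then span a Vandermonde matrix in the distinct nodes $\omega_m^{j_1}, \ldots, \omega_m^{j_L}$, which is invertible. This is exactly \cite[Lemma 2]{ACM12}, so $(\mathbf{F}_m)_{\{0,\ldots,L-1\}}$ is full spark and part (1) follows. For part (2), when $m$ is prime, Chebotar\"ev's theorem (Theorem \ref{chebotar}) guarantees that \emph{every} square submatrix of $\mathbf{F}_m$ is invertible; in particular every $L \times L$ minor of $(\mathbf{F}_m)_W$ is nonzero for any $W$ with $|W| = L$, so $(\mathbf{F}_m)_W$ is full spark. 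For part (3), when $m$ is a prime power and $W$ is uniformly distributed over the divisors of $m$, applying Theorem \ref{fs} with $d$ replaced by $m$ and $\Omega$ by $W$ directly yields that $(\mathbf{F}_m)_W$ is full spark.

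The main point to watch is that the three cases invoke \emph{different} sources of the full spark property, with increasing number-theoretic content: part (1) is completely unconditional on $m$ because it rests only on the Vandermonde determinant, whereas parts (2) and (3) require $m$ to be prime, respectively a prime power, in order to apply the Chebotar\"ev and Aldroubi--Cabrelli--Molter characterizations. Since Theorem \ref{main2}(2) has already absorbed all the analytic work (the Poisson summation and subsampling reduction), no genuine obstacle remains here: the corollary is a bookkeeping assembly of Theorem \ref{main2}(2) with the three cited full-spark results. The only care needed is to confirm that the ambient dimension matches, i.e.\ that we test full spark of the $L \times m$ matrix $(\mathbf{F}_m)_W$ rather than of a submatrix of $\mathbf{F}_d$.
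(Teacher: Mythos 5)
Your proposal is correct and follows exactly the route the paper intends: the corollary is stated as an immediate consequence of Theorem \ref{main2}(2), with the full spark property of $(\mathbf{F}_m)_W$ supplied in the three cases by the Vandermonde structure (\cite[Lemma 2]{ACM12}), Chebotar\"ev's theorem, and Theorem \ref{fs} respectively. Your closing remark about testing full spark of $(\mathbf{F}_m)_W$ rather than a submatrix of $\mathbf{F}_d$ is exactly the right point of care, and matches the paper's treatment.
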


Since $m$ is a divisor of $d$, we can choose $m$ to be prime or some power of a prime. If this is the case, we immediately know how to construct all possible $W$ to give an admissible union of periodic set $\Omega$ for $\mathcal{B}_L$.



\section{Two Variable Case $I=\Z_{d}\times \Z_{d}$} \label{twovariablecase}
In this section, we consider the case $I=\Z_d\times \Z_d$, which is the product group of two identical groups $\Z_d$. Suppose $\mathbf{A}$ is a circular convolution operator defined by a convolution kernel $\mathbf{a} \in \ell^2(\Z_d \times \Z_d)$.  One natural way to consider Problem \ref{prob1} in the two variable setting  is viewing it as a single variable setting since we can map $\Z_d \times \Z_d$  to $\Z_{d^2}$ by sending $(i,j)$ to $di+j$. Under this identification, the operator $\mathbf{A}$ corresponds to a linear operator $\tilde{{\mathbf {A}}}$ acting on $\ell^2(\Z_{d^2})$ and $\tilde{\mathbf {A}}=(\mathbf{F}_d \otimes \mathbf{F}_d)^{*} \mathbf{diag(\tilde{\hat{a}})} (\mathbf{F}_d \otimes \mathbf{F}_d)$, where $\mathbf{\tilde{\hat{a}}}$ is the image of ${ \mathbf{ \hat a}}$ in $\ell^2(\Z_{d^2})$ under the above identification. Using the labelling method described in Subsection \ref{notation}, we state a similar version of Proposition $\ref{prop1}$ for the two variable case.

\begin{proposition}
\label{prop2}
Let $\mathbf{A}$ be a circular convolution operator with a kernel $\mathbf{a} \in \ell^2(\Z_d\times \Z_d)$ and $\{\Lambda_k: k=1,\cdots, N_{\mathbf{A}}\}$ be the level sets of ${\mathbf{\hat a}}$. Suppose  $\Omega \subset \Z_d \times \Z_d$. Let 
 $Y = \{\mathbf{e}_{j_1,j_2}, \mathbf{Ae}_{j_1,j_2},\cdots, \mathbf{A}^{l_{j_1,j_2}}\mathbf{e}_{j_1, j_2} : (j_1,j_2) \in \Omega \}$. 
\begin{enumerate}
\item If $Y$ is a frame for $\ell^2(\Z_d\times \Z_d)$, then for each $k$, the submatrix of $\mathbf{F}_d\times \mathbf{F}_d$ built from rows indexed by $\Lambda_k$ and columns indexed by $\Omega$  has rank $|\Lambda_k|$.  Hence it is necessary to have $\lvert \Omega \rvert \geq \max_{k}|\Lambda_k|$. 
\item For each $(j_1,j_2) \in \Omega $, we define  by $r_{j_1,j_2}$  the degree of the $ \mathbf{A}$-annihilator of $\mathbf{e}_{j_1,j_2}$. If for each $(j_1,j_2) \in \Omega$ and each $k$, $l_{j_1,j_2} \geq r_{j_1,j_2}-1$  and the submatrix of $\mathbf{F}_d\times \mathbf{F}_d$ built from rows indexed by $\Lambda_k$ and columns indexed by $\Omega$ has rank $|\Lambda_k|$, then  $Y$ is a frame for $\ell^2(\Z_d\times \Z_d)$.
\item If for each $(j_1, j_2) \in \Omega$ and each $k$, $\mathbf{A}^{l_{j_1,j_2}+1} \mathbf{e}_{j1,j2}$ is in the space spanned by Y and  the submatrix of $\mathbf{F}_d\times \mathbf{F}_d$ built from rows indexed by $\Lambda_k$ and columns indexed by $\Omega$  has rank  $|\Lambda_k|$,  then  $Y$ is a frame for $\ell^2(\Z_d\times \Z_d)$. 
\end{enumerate}
\end{proposition}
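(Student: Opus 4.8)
The plan is to transport the proof of Proposition~\ref{prop1} to the product group, with the unitary $\mathbf{F}_d \otimes \mathbf{F}_d$ taking over the role of $\mathbf{F}_d$. Since a Kronecker product of unitary matrices is unitary and $\mathbf{A} = (\mathbf{F}_d \otimes \mathbf{F}_d)^{*}\, \mathbf{diag}(\mathbf{\hat a})\, (\mathbf{F}_d \otimes \mathbf{F}_d)$ by the convolution theorem for $\Z_d \times \Z_d$, applying $\mathbf{F}_d \otimes \mathbf{F}_d$ to $Y$ produces
$$\tilde Y = \{ \mathbf{f}_{j_1,j_2},\ \mathbf{diag}(\mathbf{\hat a})\mathbf{f}_{j_1,j_2},\ \cdots,\ (\mathbf{diag}(\mathbf{\hat a}))^{l_{j_1,j_2}}\mathbf{f}_{j_1,j_2} : (j_1,j_2)\in\Omega \},$$
where $\mathbf{f}_{j_1,j_2} = (\mathbf{F}_d \otimes \mathbf{F}_d)\mathbf{e}_{j_1,j_2}$, and $\tilde Y$ is a frame for $\ell^2(\Z_d\times\Z_d)$ if and only if $Y$ is.

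The one genuinely new ingredient --- and the step I expect to demand the most care --- is the bookkeeping that converts the matrix-rank hypotheses of the statement into the projection hypotheses of Proposition~\ref{prop1}. Because $\mathbf{diag}(\mathbf{\hat a})$ is diagonal in the tensor labeling of Subsection~\ref{notation}, its distinct eigenvalues $\lambda_k$ are exactly the constant values of $\mathbf{\hat a}$ on the level sets $\Lambda_k$, and the corresponding eigenspace projection $P_{\Lambda_k}$ is the coordinate projection onto $E_k = \mathrm{span}\{\mathbf{e}_l : l\in\Lambda_k\}$, so that $P_{\Lambda_k}\mathbf{diag}(\mathbf{\hat a}) = \lambda_k P_{\Lambda_k}$ and $P_{\Lambda_k}^2 = P_{\Lambda_k}$. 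Moreover $\mathbf{f}_{j_1,j_2}$ is precisely the column of $\mathbf{F}_d \otimes \mathbf{F}_d$ indexed by $(j_1,j_2)$, so $P_{\Lambda_k}\mathbf{f}_{j_1,j_2}$ is that column restricted to the rows $l\in\Lambda_k$. Stacking these vectors as $(j_1,j_2)$ ranges over $\Omega$ therefore reproduces exactly the submatrix of $\mathbf{F}_d \otimes \mathbf{F}_d$ with rows indexed by $\Lambda_k$ and columns indexed by $\Omega$; consequently $\{P_{\Lambda_k}\mathbf{f}_{j_1,j_2} : (j_1,j_2)\in\Omega\}$ is a frame for $E_k$ (equivalently, spans $E_k$) if and only if that submatrix has full row rank $|\Lambda_k|$. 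Verifying that the product labeling lines the indices up correctly is where I would be most careful.

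With this dictionary in place, the three assertions follow by transcribing the three parts of Proposition~\ref{prop1}. For (1), if $\tilde Y$ is complete then any $\mathbf{g}\in E_k$ orthogonal to $\{P_{\Lambda_k}\mathbf{f}_{j_1,j_2}\}$ is orthogonal to all of $\tilde Y$ --- using $P_{\Lambda_k}\mathbf{g}=\mathbf{g}$ and $P_{\Lambda_k}\mathbf{diag}(\mathbf{\hat a})^{l} = \lambda_k^{l}P_{\Lambda_k}$ --- hence $\mathbf{g}=\mathbf{0}$, giving the rank condition and the bound $|\Omega|\geq\max_k|\Lambda_k|$. For (2), the condition $l_{j_1,j_2}\geq r_{j_1,j_2}-1$ lets one promote each orthogonality relation to the powers $l=0,\dots,N_{\mathbf{A}}-1$; the resulting $N_{\mathbf{A}}\times N_{\mathbf{A}}$ Vandermonde system in the distinct $\lambda_k$ forces $\langle\mathbf{g},P_{\Lambda_k}\mathbf{f}_{j_1,j_2}\rangle=0$ for every $k$, and the full-rank condition then yields $P_{\Lambda_k}\mathbf{g}=\mathbf{0}$ for all $k$, whence $\mathbf{g}=\sum_k P_{\Lambda_k}\mathbf{g}=\mathbf{0}$. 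Part (3) is identical, with the hypothesis $\mathbf{A}^{l_{j_1,j_2}+1}\mathbf{e}_{j_1,j_2}\in\mathrm{span}(Y)$ directly supplying the higher-power relations in place of the annihilator-degree bound. Beyond the index bookkeeping there is no analytic obstacle, since the argument is structurally identical to the single-variable case.
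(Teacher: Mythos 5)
Your proposal is correct and follows exactly the route the paper intends: the paper gives no separate proof of Proposition \ref{prop2}, instead identifying $\Z_d\times\Z_d$ with $\Z_{d^2}$ so that $\mathbf{A}=(\mathbf{F}_d\otimes\mathbf{F}_d)^{*}\mathbf{diag}(\mathbf{\hat a})(\mathbf{F}_d\otimes\mathbf{F}_d)$ and the argument of Proposition \ref{prop1} carries over verbatim. Your explicit dictionary between the rank of the $(\Lambda_k,\Omega)$-submatrix and the frame property of $\{P_{\Lambda_k}\mathbf{f}_{j_1,j_2}\}$ (using that $\mathbf{F}_d\otimes\mathbf{F}_d$ is symmetric and unitary) is precisely the bookkeeping the paper leaves implicit.
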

Similar to the one variable case, the problem of finding a common minimal admissible set $\Omega$ for  2D convolution operators  with eigenvalues  subject to the same largest geometric multiplicity is equivalent to finding full spark matrices from rows of $\mathbf{F}_d \otimes \mathbf{F}_d$ indexed by $\Omega$. Unlike $\mathbf{F}_d$, $\mathbf{F}_d \otimes \mathbf{F}_d$ is not a Vandermonde matrix. In the case of $\mathbf{F}_d$,  the submatrix built from any consecutive $L \leq d$ rows of $\mathbf{F}_d$ is full spark. This fact follows from the Vandermonde structure of $\mathbf{F}_d$. But the following lemma shows that this is not true for $\mathbf{F}_d \otimes \mathbf{F}_d$ . In fact, we prove that it requires at least $d+1$ rows of $\mathbf{F}_d \otimes \mathbf{F}_d$ to build a nontrivial full spark matrix. 
\begin{lemma}
\label{lemt}
For a positive integer $1<L\leq d$, given any $L$ rows of $\mathbf{F}_d \otimes \mathbf{F}_d$, there exist $L$ columns of  $\mathbf{F}_d \otimes \mathbf{F}_d$ such that the resulting $L \times L$ submatrix is not invertible. 
\end{lemma}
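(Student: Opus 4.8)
The plan is to argue directly from the entries of $\mathbf{F}_d \otimes \mathbf{F}_d$. Using the labelling of rows and columns by pairs from Subsection~\ref{notation}, the entry in the row indexed by $(\alpha,\beta)$ and the column indexed by $(x,y)$ is $\tfrac1d\,\omega_d^{\,\alpha x+\beta y}$. Fix the given $L$ rows $(\alpha_s,\beta_s)$, $s=1,\dots,L$. The central idea is to select the $L$ columns so that they lie on a single arithmetic progression through the origin with direction $(x,y)\in\Z_d\times\Z_d$, namely the columns $(tx,ty)$ for $t=0,1,\dots,L-1$. With this choice the entry in row $s$, column $t$ becomes $\tfrac1d\big(\omega_d^{\,\alpha_s x+\beta_s y}\big)^{t}$, so the resulting $L\times L$ matrix is $\tfrac1d$ times a Vandermonde matrix whose node for row $s$ is $\theta_s:=\omega_d^{\,\alpha_s x+\beta_s y}$.

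Consequently the submatrix is singular as soon as two of these nodes coincide, that is, as soon as $x(\alpha_s-\alpha_{s'})+y(\beta_s-\beta_{s'})\equiv 0 \pmod d$ for some pair $s\neq s'$; and the chosen columns are genuinely distinct provided $(x,y)$ has order at least $L$ in $\Z_d\times\Z_d$. The lemma therefore reduces to a purely group-theoretic claim: taking the nonzero difference $\delta=(\delta_1,\delta_2)=(\alpha_1-\alpha_2,\beta_1-\beta_2)$ of two of the given rows, the orthogonal set $\delta^{\perp}=\{(x,y):x\delta_1+y\delta_2\equiv 0 \pmod d\}$ contains an element of order $\ge L$. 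I would in fact produce one of maximal order $d$, i.e.\ a primitive direction with $\gcd(x,y,d)=1$; recall that $(x,y)$ has order exactly $d$ precisely when $\gcd(x,y,d)=1$, and then $(tx,ty)$, $t=0,\dots,L-1$, are $L\le d$ distinct columns.

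The main obstacle, and the only place where the composite structure of $d$ matters, is finding a \emph{primitive} vector inside $\delta^{\perp}$. The naive candidate $(x,y)=(\delta_2,-\delta_1)$ is orthogonal to $\delta$ but may have order less than $L$ (e.g.\ $\delta=(d/2,d/2)$ forces this choice to have order $2$), which is exactly the degeneracy responsible for $\mathbf{F}_d\otimes\mathbf{F}_d$ failing to be full spark on few rows. I would resolve this by the Chinese Remainder Theorem: writing $d=\prod_{\ell}\ell^{a_\ell}$ and splitting both $\delta^{\perp}$ and the condition $\gcd(x,y,d)=1$ over the prime-power factors, it suffices to find, for each prime power $\ell^{a}$, a local solution that is a unit in at least one coordinate. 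For a fixed prime $\ell$, if the $\ell$-component of $\delta$ vanishes then any primitive vector works; otherwise, with $v_\ell$ the $\ell$-adic valuation and $b=\min\big(v_\ell(\delta_1),v_\ell(\delta_2)\big)<a$, the explicit choice $(x,y)=\big(-\delta_2/\ell^{\,b},\,\delta_1/\ell^{\,b}\big)$ satisfies $x\delta_1+y\delta_2=0$ exactly and has the coordinate attaining the minimal valuation equal to a unit, hence is primitive locally. Recombining the local solutions by CRT yields a single $(x,y)\in\Z_d\times\Z_d$ with $\gcd(x,y,d)=1$ and $x\delta_1+y\delta_2\equiv 0 \pmod d$.

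Assembling the pieces: pick any two of the $L$ rows, form $\delta\neq(0,0)$, produce the primitive orthogonal direction $(x,y)$ as above, and take the columns $(tx,ty)$ for $t=0,\dots,L-1$. These are $L$ distinct columns, and the associated $L\times L$ submatrix is $\tfrac1d$ times a Vandermonde matrix whose first two nodes satisfy $\theta_1=\theta_2$ (because $\omega_d^{\,x\delta_1+y\delta_2}=1$); it therefore has two identical rows and vanishing determinant, so it is not invertible. I would emphasize that this single construction handles every configuration of the given rows simultaneously; in particular, the degenerate cases in which two rows share a first or second coordinate are merely the instances where $\delta$ has a zero entry, so no separate case analysis is required.
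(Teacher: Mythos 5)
Your proof is correct, and its central idea coincides with the paper's: fix two of the given rows, pass to the difference $\delta$ of their indices, and choose all $L$ columns inside the annihilator $\delta^{\perp}=\{(x,y): x\delta_1+y\delta_2\equiv 0 \pmod d\}$, so that those two rows of the submatrix become identical. Where you diverge is in how you certify that $\delta^{\perp}$ supplies $L$ distinct columns. The paper disposes of this in one line by Pontryagin duality: the subgroup $G$ generated by $\delta$ has order at most $d$ (the exponent of $\Z_d\times\Z_d$), so its annihilator $H$ has $|H|=d^2/|G|\geq d\geq L$ elements, and \emph{any} $L$ of them will do. You instead construct an explicit cyclic subgroup of order $d$ inside $\delta^{\perp}$ by producing a primitive orthogonal direction $(x,y)$ with $\gcd(x,y,d)=1$ via valuations and the Chinese Remainder Theorem; this is the bulk of your argument, it is carried out correctly, and as a bonus it exhibits the singular submatrix as (a scalar times) a Vandermonde matrix with a repeated node. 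The trade-off is that the primitivity issue you work hard to resolve is simply not needed for the lemma --- a plain counting of $|H|$ already guarantees $L$ distinct columns --- so the paper's route is shorter, while yours is more constructive and makes the chosen columns completely explicit.
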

\begin{proof}Let $\{(k_j,l_j):j=1,\cdots L\}$ be the row indices  of $\mathbf{F}_d \otimes \mathbf{F}_d$. If $ 1 <L\leq d$, we claim that there exist column indices  $\{(s_j,p_j):j=1,\cdots,L\}$ such that the resulting $L \times L$ submatrix has two identical rows. Let $G$ be the additive subgroup of $Z_d\times \Z_d$ generated by $(k_1-k_2,l_1-l_2)\in \Z_d \times \Z_d$, then $\lvert G \rvert \leq d.$ It is known from  Pontryagin duality theory (see \cite{d}) that there exists a corresponding annihilator subgroup $H \subset \Z_d \times \Z_d$ of size $\frac{d^2}{\lvert G \rvert}$, such that for any $(s,p) \in H$,
\begin{equation}\label{duality}
\omega_{d}^{(k_1-k_2)s+(l_1-l_2)p}=1.
\end{equation}
Since  $\lvert H\rvert=\frac{d^2}{\lvert G \rvert} \geq d$, we can choose any subset of $H$ consisting of $L$ elements as column indices.  By \eqref{duality},
$$\omega_{d}^{k_1s+l_1p}=\omega_{d}^{k_2s+l_2p}, $$ 
which means that the first two rows of the built submatrix are identical.  
\end{proof}

We can also prove that for some $2D$ convolution operators, it may happen that the lower bound for the cardinality of  admissible sets is more than the largest geometric multiplicity of their eigenvalues. While we know a construction of full spark matrix built from rows of $\mathbf{F}_d$ with any given spark between $1$ and $d$, Lemma \ref{lemt} shows that it is impossible to find a full spark matrix with spark less than $(d+1)$ from submatrices built from rows of $\mathbf{F}_d \otimes \mathbf{F}_d.$ To our best knowledge, in the case of $\mathbf{F}_d \otimes \mathbf{F}_d$, there is no deterministic formula that gives a way to construct full spark matrices for any $d$. However, we can draw a similar conclusion as Theorem \ref{universalsampling}. 
\begin{proposition}If the submatrix  of $ \mathbf{F}_d \otimes \mathbf{F}_d$ built from rows  indexed by $\Omega \subset \Z_d \times \Z_d$ is full spark, so is the submatrix of $\mathbf{F}_d \otimes \mathbf{F}_d$  build from rows indexed by 
\begin{enumerate}
\item $\Omega+(s,p)$ for any $(s,p) \in \Z_d\times \Z_d$,
\item $(c_1,c_2)\Omega=\{(c_1i, c_2j): (i,j) \in \Omega\}$ for any $(c_1, c_2) \in \Z_d\times \Z_d$ such that both $c_1, c_2$ are coprime to $d$, 
\item $\Omega^c=\Z_d\times \Z_d-\Omega$. 
\end{enumerate}
\end{proposition}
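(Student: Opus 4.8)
The plan is to treat the three assertions separately, in each case showing that the indicated operation on the row set $\Omega$ amounts to an operation on the matrix $\mathbf{U} := \mathbf{F}_d \otimes \mathbf{F}_d$ that preserves the full spark property. I write the entries as $\mathbf{U}[(i_1,i_2),(j_1,j_2)] = \tfrac1d\,\omega_d^{i_1 j_1 + i_2 j_2}$ and denote by $\mathbf{U}[R,S]$ the submatrix with rows indexed by $R$ and columns indexed by $S$, so that $\mathbf{U}_\Omega$ being full spark means every $|\Omega|\times|\Omega|$ submatrix $\mathbf{U}[\Omega,S]$ (with $|S|=|\Omega|$) is invertible. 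Throughout I use that column scaling by nonzero constants and column permutations both preserve full spark, since each merely multiplies the relevant minors by nonzero factors or relabels the column sets.

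For (1), I would compute directly that passing from the row $(i_1,i_2)$ to the row $(i_1+s,i_2+p)$ multiplies the entry in column $(j_1,j_2)$ by the unimodular scalar $\omega_d^{s j_1 + p j_2}$, which depends only on the column. Hence $\mathbf{U}_{\Omega+(s,p)}$ equals $\mathbf{U}_\Omega$ right-multiplied by the invertible diagonal matrix whose $(j_1,j_2)$ entry is $\omega_d^{s j_1 + p j_2}$, and full spark is preserved. For (2), since $c_1,c_2$ are coprime to $d$ the map $\pi\colon (j_1,j_2)\mapsto(c_1 j_1, c_2 j_2)$ is a permutation of $\Z_d\times\Z_d$; the identity $\mathbf{U}[(c_1 i_1, c_2 i_2),(j_1,j_2)] = \mathbf{U}[(i_1,i_2),\pi(j_1,j_2)]$ shows that $\mathbf{U}_{(c_1,c_2)\Omega}$ is obtained from $\mathbf{U}_\Omega$ by permuting its columns via $\pi$, so again full spark is preserved.

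The substantive part is (3), where the plan is to exploit that $\mathbf{U}$ is unitary and symmetric. By Jacobi's identity for complementary minors of an invertible matrix, applied with rows $\Omega^c$ and columns $S$ where $|S| = |\Omega^c|$, one has $\det(\mathbf{U}[\Omega^c,S]) = \pm\det(\mathbf{U})\,\det(\mathbf{U}^{-1}[S^c,\Omega])$; since $\det\mathbf{U}\neq 0$, the left side is nonzero exactly when $\det(\mathbf{U}^{-1}[S^c,\Omega])\neq 0$, and as $S$ runs over all sets of size $|\Omega^c|$ the complement $S^c$ runs over all sets of size $|\Omega|$. Because $\mathbf{U}$ is symmetric (as $\mathbf{F}_d^T = \mathbf{F}_d$) and unitary, $\mathbf{U}^{-1} = \mathbf{U}^{*} = \overline{\mathbf{U}}$, and from $\overline{\omega_d^{\,jk}} = \omega_d^{(-j)k}$ one sees that $\overline{\mathbf{U}}$ is obtained from $\mathbf{U}$ by the column permutation $(j_1,j_2)\mapsto(-j_1,-j_2)$. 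Using that $\overline{\mathbf{U}}$ is again symmetric to transpose each minor, the requirement that $\det(\overline{\mathbf{U}}[S^c,\Omega])\neq 0$ for every $S$ becomes the statement that $\det(\overline{\mathbf{U}}[\Omega,T])\neq 0$ for every $T$ of size $|\Omega|$, i.e. that the row-submatrix $\overline{\mathbf{U}}_\Omega$ is full spark. Finally, since $\overline{\mathbf{U}}_\Omega$ is a column permutation of $\mathbf{U}_\Omega$, it is full spark precisely when $\mathbf{U}_\Omega$ is; chaining these equivalences yields that $\mathbf{U}_{\Omega^c}$ is full spark iff $\mathbf{U}_\Omega$ is.

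I expect the delicate point to be the index bookkeeping in (3): correctly matching the complementary row and column sets in Jacobi's identity, and then tracking the two separate permutations, namely the conjugation $\overline{\omega_d^{\,jk}} = \omega_d^{(-j)k}$ and the transposition that converts the condition on the minors $\det(\overline{\mathbf{U}}[S^c,\Omega])$ into the full spark condition for the row-submatrix $\overline{\mathbf{U}}_\Omega$. Once this matching is set up, parts (1) and (2) are short direct computations and (3) follows from the chain of equivalences above. This argument is the natural two-dimensional analogue of Theorem \ref{universalsampling}, with the Vandermonde-specific reasoning there replaced by the unitarity and symmetry of $\mathbf{F}_d\otimes\mathbf{F}_d$.
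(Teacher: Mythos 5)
Your proof is correct. The paper states this proposition without proof, presenting it as the two-dimensional analogue of Theorem \ref{universalsampling} (quoted from the full spark frames literature), and your argument --- column scaling for translations, a column permutation induced by $(j_1,j_2)\mapsto(c_1j_1,c_2j_2)$ for dilations by units, and the Jacobi complementary-minor identity combined with the unitarity and symmetry of $\mathbf{F}_d\otimes\mathbf{F}_d$ for the complement --- is exactly the intended generalization of the one-dimensional proof, with all the index bookkeeping in part (3) carried out correctly.
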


In modeling physical or biological phenomena, the convolution kernel $\mathbf{a}$ usually  possesses certain symmetries in the frequency domain. We introduce several types of symmetric convolution kernels and consider the problem of finding minimal admissible $\Omega$ for these symmetric convolution kernels. For the convenience of statement, we assume $d$ is odd and we identify the level sets of $\mathbf{\hat a}$ with their modulo images in $\{-\frac{d-1}{2}, \cdots, \frac{d-1}{2}\} \times \{-\tfrac{d-1}{2}, \cdots, \tfrac{d-1}{2}\}$.  In the rest of paper, we denote $\mathcal{I}=\{-\frac{d-1}{2}, \cdots, \frac{d-1}{2}\} \times \{-\tfrac{d-1}{2}, \cdots, \tfrac{d-1}{2}\}$. The following definitions can be found in \cite{HIP}. 
\begin{definition}
\label{def4}Let $\mathbf{a}$ be a $2D$ array defined on $\Z_{d}\times \Z_{d}$. Recall that  ${\mathbf{\hat a}}$ denote its unnormalized discrete Fourier transform. 

\begin{enumerate}
\item Given any  level set $\Lambda$ of ${\mathbf{\hat a}}$, if all elements in $\Lambda$ have the  same $\ell^{\infty}$ norm, $\mathbf{a}$ is said to possess $\ell^{\infty}$-symmtery in frequency response.
\item  If the level sets of ${\mathbf{\hat a}}$ consist of the sets in the form  of $\{(s,p),(s,-p),(-s,p),(-s,-p)\}$ for $(s,p) \in \mathcal{I}$, $\mathbf{a}$ is said to possess quadrantal symmetry in frequency response.
\item If the level sets of ${\mathbf{\hat a}}$ consist of the sets in the form of  $\{(p,s),(s,p),(-p,-s),(-s,-p)\}$ for $(s,p) \in \mathcal{I}$, $\mathbf{a}$ is said to possess diagonal symmetry in frequency response.
\item If the level sets of ${\mathbf{\hat a}}$  consist of the sets in the form of  $$\{ (s,p),(p,s),(-p,s),(-s,p), (-s,-p), (-p,-s),(p,-s),(s,-p)\}$$
for $(s,p) \in \mathcal{I}$, $\mathbf{a}$ is said to possess octagonal symmetry in frequency response.
\end{enumerate}
\end{definition}

For convolution kernels $\mathbf{a}$ with the same symmetry in  frequency response,  their discrete Fourier transform have the same level sets. By Proposition \ref{prop2}, it is possible for us to construct minimal admissible  sets  and then determine $l_{i,j}$ for 2D convolution kernels that are subject to the same symmetry conditions in frequency response. The specific constructions we will show are inspired by the ideas stemmed from the multivariable interpolation theory. 
\begin{theorem}\label{stablerecovery4case}
Let $\mathbf{A}$ be a circular convolution operator defined by a convolution kernel $\mathbf{a} \in \ell^2(\Z_d\times \Z_d)$. Let $\Omega$ be a proper subset of $\Z_d\times \Z_d$ and $Y = \{\mathbf{e}_{i,j}, \mathbf{Ae}_{i,j},\cdots, \mathbf{A}^{l_{i,j}}\mathbf{e}_{i,j} : (i,j) \in \Omega \}$. 
\begin{enumerate}
\item Assume that $\mathbf{a}$ possesses $\ell^{\infty}$ symmetry in frequency response. If  we choose $\Omega=\{ 0,1\} \times \Z_d \cup \Z_d \times \{0,1\}$ and for each $(i,j) \in \Omega$, $l_{i,j}=\frac{d-1}{2}$, then $Y$ is a frame for $\ell^2(\Z_d\times \Z_d)$. 
\item Assume that $\mathbf{a}$ possesses quadrantal symmetry in frequency response. Suppose  the elements $i_1, i_2, j_1$ and $ j_2 $ of  $\Z_d$ satisfy the conditions $gcd(\lvert i_1-i_2 \lvert ,d)=1$ and $gcd(\lvert j_1-j_2 \rvert, d)=1$.  If we  choose $\Omega=\{(i_1,j_1),(i_2,j_2),(i_1,j_2),(i_2,j_1)\}$, and for each $(i,j)\in \Omega$, $l_{i,j}=\frac{(d+1)^2}{4}-1$, then $Y$ is a frame  for $\ell^2(\Z_d \times \Z_d)$.
\item Assume that  $\mathbf{a}$ possesses diagonal symmetry in frequency response. Suppose  the elements $i_1, i_2 , j_1$ and $j_2$ of $\Z_d$  satisfy the conditions $gcd(i_2,d)=1$ and $gcd(j_2,d)=1$. If we choose  $\Omega=\{(i_1,0),( i_1+i_2,0),(i_1+2i_2,0),(i_1+3i_2,0)\} \text{ or } \Omega=\{(0, j_1),(0,j_1+j_2),(0,j_1+2j_2), (0,j_1+3j_2)\}$, and for each $(i,j) \in \Omega$, $l_{i,j}=\frac{(d+1)^2}{4}-1$, then $Y$ is a frame  for $\ell^2(\Z_d \times \Z_d)$.
\item Assume that  $\mathbf{a}$ possesses octagonal symmetry in frequency response. Suppose the elements $i_1, i_2 , j_1$ and  $j_2 $ of  $\Z_d$ satisfy the conditions $gcd(\lvert i_1-i_2\vert,d)=1$ and $gcd(j_2,d)=1$. If we choose $\Omega=\{i_1,i_2 \} \times \{j_1, j_1+j_2,j_1+2j_2, j_1+3j_2 \}$, and for each $(i,j) \in \Omega$, $l_{i,j}=\frac{(d+1)(d+3)}{8}-1$, then $Y$ is a frame  for $\ell^2(\Z_d \times \Z_d)$. Alternatively, 
suppose the elements $i_1, i_2 , j_1$ and $ j_2$ of $\Z_d$ satisfy the conditions $gcd(\lvert j_1-j_2\vert,d)=1$ and $gcd(i_2,d)=1$. If we choose
$ \Omega=\{i_1, i_1+i_2,i_1+2i_2, i_1+3i_2 \} \times \{j_1,j_2\}$, and for each $(i,j) \in \Omega$, $l_{i,j}=\frac{(d+1)(d+3)}{8}-1$, then $Y$ is a frame  for $\ell^2(\Z_d \times \Z_d)$. 

\end{enumerate}
\end{theorem}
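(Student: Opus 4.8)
The plan is to derive all four parts from Proposition~\ref{prop2} by verifying, in each case, its two sufficient hypotheses: the annihilator bound $l_{i,j}\ge r_{i,j}-1$, and the rank condition that for every level set $\Lambda_k$ the submatrix of $\mathbf F_d\otimes\mathbf F_d$ with rows indexed by $\Lambda_k$ and columns by $\Omega$ has rank $|\Lambda_k|$. First I would dispose of the annihilator bound uniformly. Since $\mathbf f_{i,j}=(\mathbf F_d\otimes\mathbf F_d)\mathbf e_{i,j}$ has all entries of modulus $1/d$, it has a nonzero component in every eigenspace, so $r_{i,j}=N_{\mathbf A}$, the number of level sets of $\mathbf{\hat a}$. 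For a kernel with a prescribed symmetry these level sets are the orbits of the corresponding symmetry group acting on $\cI$, so $N_{\mathbf A}$ equals the number of such orbits. A direct count (or Burnside's lemma) gives $\frac{(d+1)^2}{4}$ orbits in the quadrantal and diagonal cases, $\frac{(d+1)(d+3)}{8}$ in the octagonal case, and $\frac{d+1}{2}$ spheres in the $\ell^{\infty}$ case; each value is exactly the prescribed $l_{i,j}+1$, so $l_{i,j}=r_{i,j}-1$ and the annihilator hypothesis holds (finer level sets only strengthen the rank condition, since a subfamily of linearly independent rows stays independent).

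It remains to verify the rank condition for every level set, and this is where the interpolation viewpoint enters. Writing out the entries of $\mathbf F_d\otimes\mathbf F_d$, the rank condition for $\Lambda_k$ is equivalent to the assertion that the only coefficients $\{c_{u,v}\}_{(u,v)\in\Lambda_k}$ for which the exponential sum $Q(x,y)=\sum_{(u,v)\in\Lambda_k}c_{u,v}\,\omega_d^{ux+vy}$ vanishes at every $(x,y)\in\Omega$ are all zero. Setting $z=\omega_d^{x}$ and $w=\omega_d^{y}$ turns $Q$ into a Laurent polynomial $\sum c_{u,v}z^{u}w^{v}$, so the rank condition says precisely that $\Omega$ is a uniqueness set for Laurent polynomials with frequency support $\Lambda_k$. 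The symmetry of $\Lambda_k$ together with the product or collinear structure of $\Omega$ is what makes these uniqueness sets explicit.

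I would then match the shape of $\Omega$ to the shape of $\Lambda_k$ in each case. In the $\ell^{\infty}$ case the cross $\Omega=\{0,1\}\times\Z_d\cup\Z_d\times\{0,1\}$ lets me restrict $Q$ to the lines $x\in\{0,1\}$: on each such line $Q$ is a univariate Laurent polynomial in $w$ of bandwidth at most $2r+1\le d$ vanishing at all $d$-th roots of unity, so every coefficient $\sum_u c_{u,v}\omega_d^{ux}$ vanishes for $x=0,1$; for $|v|<r$ this is a two-term relation in $c_{\pm r,v}$ governed by a $2\times2$ Vandermonde with determinant $\omega_d^{-r}-\omega_d^{r}\neq0$ (because $2r\not\equiv0\pmod d$), the symmetric argument on the horizontal lines kills the coefficients with $|u|<r$, and the four corners $(\pm r,\pm r)$ fall to the same $2\times2$ minor. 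In the quadrantal case both $\Omega=\{i_1,i_2\}\times\{j_1,j_2\}$ and a generic orbit $\{\pm s\}\times\{\pm p\}$ are product sets, so the submatrix factors as a Kronecker product of two $2\times2$ matrices with determinants $\omega_d^{s(i_1-i_2)}-\omega_d^{-s(i_1-i_2)}$ and $\omega_d^{p(j_1-j_2)}-\omega_d^{-p(j_1-j_2)}$, both nonzero under the gcd hypotheses since $d$ is odd. In the diagonal and octagonal cases $\Omega$ is collinear (an arithmetic progression along a coordinate axis, doubled to a $2\times4$ grid in the octagonal case), so along the sampling direction the entries depend only on one coordinate of the row index and the submatrix becomes a Vandermonde, with nodes $\omega_d^{u i_2}$ in the diagonal case and a two-point factor times a four-node Vandermonde in the octagonal case, scaled by a diagonal; full rank then reduces to pairwise distinctness of those nodes, which for a generic orbit follows from $\gcd(i_2,d)=1$ (resp.\ the stated gcd conditions) and the distinctness of $s,p,-s,-p$.

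The main obstacle is precisely the non-product, reflection-orbit level sets of the diagonal and octagonal cases, for which the reduction to a one-dimensional Vandermonde is clean only on the generic orbits. The delicate step is the degenerate orbits, namely those meeting a coordinate axis or a diagonal, where the orbit collapses in size and two interpolation nodes threaten to coincide; there one must exploit the freedom to choose between the horizontal and vertical sampling lines (the two alternative $\Omega$'s offered in parts (3) and (4)) together with the gcd conditions to force the coordinates of $\Lambda_k$ controlling the nodes to remain pairwise distinct modulo $d$. Carefully enumerating and clearing these degenerate orbits is the technical heart of the argument, and I expect it to be the step that demands the most care.
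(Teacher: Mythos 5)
Your overall strategy---reduce everything to Proposition~\ref{prop2}, settle the annihilator bound by counting orbits to get $N_{\mathbf A}$, and verify the rank condition level set by level set through Kronecker and Vandermonde factorizations---is the same as the paper's, and your treatment of parts (1) and (2) and of the generic (full-size, off-axis, off-diagonal) orbits in (3) and (4) reproduces the paper's computations. Your uniform handling of every sphere $\Lambda_l$ in the $\ell^{\infty}$ case by restricting the exponential sum to the lines $x\in\{0,1\}$ and $y\in\{0,1\}$ and inverting a full univariate DFT is, if anything, a little cleaner than the paper's passage to square $8l\times 8l$ submatrices.

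The gap is exactly where you flagged it, the degenerate orbits of part (3), but the repair you sketch is not available. Take the axis orbit $\Lambda=\{(p,0),(0,p),(-p,0),(0,-p)\}$ with $p\neq 0$, a legitimate level set under diagonal symmetry. With $\Omega=\{(i_1+ki_2,0):k=0,\dots,3\}$ every column has second coordinate $0$, so the rows of the submatrix indexed by $(0,p)$ and $(0,-p)$ are both the constant row $\tfrac1d(1,1,1,1)$; the rank is at most $3<\lvert\Lambda\rvert$, and the sufficient condition of Proposition~\ref{prop2} cannot be met. You cannot ``exploit the freedom to choose between the horizontal and vertical sampling lines'' orbit by orbit: the theorem fixes a single $\Omega$, and the vertical choice merely transfers the same collapse to the rows $(p,0)$ and $(-p,0)$. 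Worse, by part (1) of Proposition~\ref{prop2} this rank deficiency shows that $Y$ actually fails to be a frame for a generic diagonally symmetric kernel, so no amount of care will close this case; the paper's own proof dismisses it with ``it is easy to verify,'' and that verification is the one step of the argument that does not go through. (The analogous axis and diagonal orbits in part (4) are genuinely fine, because there $\Omega$ is a two-dimensional $2\times 4$ grid rather than a collinear set, so you should separate that analysis from any ``reduction to the diagonal case.'')
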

\begin{proof}
\begin{enumerate}
\item Since $\mathbf{a}$ has $\ell^{\infty}$ symmetry in frequency response,  ${\mathbf{\hat a}}$ has $\frac{d+1}{2}$ level sets which are given by $\Lambda_l=\{(s,p): (s,p)\in \mathcal{I}, \max(\lvert s\rvert, \lvert p\rvert)=l\}$ for $l=0,\cdots,\frac{d-1}{2}.$ 
Then by computation,  we get $N_{\mathbf{A}}=\frac{d+1}{2}$ and $M_{\mathbf{A}}=4d-4.$  Let $\Omega=\{ 0,1\} \times \Z_d \cup \Z_d \times \{0,1\}$. We first show that the $(4d-4) \times (4d-4)$ submatrix $\mathbf{S}$ built from  rows indexed by $\Lambda_{\tfrac{d-1}{2}}$ and columns indexed by $\Omega$ of $\mathbf{F_d}\otimes \mathbf{F}_d$ is invertible. Assume that there exists a vector $\textbf{c}=(\mathbf{c}(k,l))_{(k,l)\in \Omega}$ such that $\mathbf{S}\textbf{c}=\mathbf{0}.$ For each $(s,p)\in \Lambda_{\tfrac{d-1}{2}}$, we have
\begin{equation}\label{sum0}
\sum\limits_{(k,l)\in \Omega} c(k,l)\omega_d^{ks}\omega_d^{lp}=0.
\end{equation}Reordering the terms in \eqref{sum0} by collecting the coefficients together for the power of $\omega_d^s$, we get 
\begin{equation}
\label{rowequation1}
\sum\limits_{l=0}^{d-1}c(0,l)\omega_d^{lp}+(\sum\limits_{l=0}^{d-1}c(1,l)\omega_d^{lp})\omega_d^{s}+\sum\limits_{k=2}^{d-1}(c(k,0)+c(k,1)\omega_d^{p})(\omega_d^{s})^k=0.
\end{equation}Denote $\Lambda'_{\tfrac{d-1}{2}}= \{-\frac{d-1}{2},\cdots,\frac{d-1}{2}\} \times \{-\frac{d-1}{2},\frac{d-1}{2}\}$. Then obviously  $\Lambda'_{\tfrac{d-1}{2}}\subset \Lambda_{\tfrac{d-1}{2}}$.
Since $\eqref{rowequation1}$ holds for $(s,p) \in \Lambda'_{\tfrac{d-1}{2}}$, we get 
\begin{equation}
\label{transformequations}
\left[ \begin{array}{cccc} 
1 &\omega_d^{-\frac{d-1}{2}} & \cdots &( \omega_d^{-\frac{d-1}{2}})^{d-1}\\
1& \omega_d^{-\frac{d-3}{2}}& \cdots &(\omega_d^{-\frac{d-3}{2}})^{d-1}\\
\vdots & \vdots &\ddots &\vdots \\
1&\omega_d^{\frac{d-1}{2}}&\cdots &(\omega_d^{\frac{d-1}{2}})^{d-1} \\
\end{array}
\right]\left[\begin{smallmatrix}
\sum\limits_{l=0}^{d-1}\mathbf{c}(0,l)\omega_d^{lp} \\
\sum\limits_{l=0}^{d-1}\mathbf{c}(1,l)\omega_d^{lp}\\
\vdots\\
\mathbf{c}(d-1,0)+\mathbf{c}(d-1,1)\omega_d^{p}
\end{smallmatrix} \right]=\mathbf{0}
\end{equation}
for $p=-\frac{d-1}{2}$ and $\frac{d-1}{2}$. 
Note that the matrix on the left of \eqref{transformequations} is an invertible Vandermonde matrix, therefore the coefficient vector on the right of \eqref{transformequations} is $\mathbf{0}$. Then we obtain  
\begin{equation}\label{finalequation}
\sum\limits_{l=0}^{d-1}\mathbf{c}(i,l)(\omega_d^{l})^{p}=0 \text{ for } i=0,1\text{ and } p=\pm \tfrac{d-1}{2}, 
\end{equation}
and
$$
\left[ \begin{smallmatrix}
1 &\omega_d^{-\tfrac{d-1}{2}} \\
1& \omega_d^{\tfrac{d-1}{2}}\\
\end{smallmatrix} \right] \left [\begin{array}{c} 
\mathbf{c}(k,0) \\
\mathbf{c}(k,1)\\
\end{array} \right]=\mathbf{0}
$$ for $ 2 \leq k \leq d-1$.  Solving  the above linear equations, we  get  $\mathbf{c}(k,0)=\mathbf{c}(k,1)=0$ for $2\leq k \leq d-1.$ We can also reorder the terms in \eqref{sum0} by collecting the coefficients together for the power of $\omega_d^p$ and get 
\begin{equation}
\label{rowequation2}
\sum\limits_{k=0}^{d-1}\mathbf{c}(k,0)\omega_d^{sk}+(\sum\limits_{k=0}^{d-1}\mathbf{c}(k,1)\omega_d^{sk})\omega_d^{p}+\sum\limits_{l=2}^{d-1}(\mathbf{c}(0,l)+\mathbf{c}(1,l)\omega_d^{s})(\omega_d^{p})^l=0.
\end{equation}
Since equation $\eqref{rowequation2}$ holds for $(s,p) \in \Lambda'_{\tfrac{d-1}{2}}$, similar to the case of reordering with the power of $\omega_d^s$, we get $\mathbf{c}(0,l)=\mathbf{c}(1,l)=0$ for $2\leq l\leq d-1.$ Now  substituting the entries of $\mathbf{c}$ we just solved into \eqref{finalequation}, and we can get $\mathbf{c}(0,0)=\mathbf{c}(0,1)=\mathbf{c}(1,0)=\mathbf{c}(1,1)=0.$ Thus $\textbf{c}=\mathbf{0}$, which implies that $\mathbf{S}$ is invertible. The same arguments can show that the $8l \times 8l$ submatrix of $\mathbf{F}_d \otimes \mathbf{F}_d$ built by rows indexed by $\Lambda_l$ and $\{0,1\} \times \{0,\cdots,2l\} \cup \{0,\cdots,2l\} \times \{0,1\} \subset \Omega$ is invertible for $l=1,\cdots,\frac{d-3}{2}.$ Therefore, we have proved that the submatrix of $\mathbf{F}_d \otimes \mathbf{F}_d$ formed by rows indexed by $\Lambda_l$ and columns indexed by $\Omega$ is of full row rank for $l=0,\cdots,N_\mathbf{A}-1.$ By (2) of Proposition \ref{prop2}, the conclusion follows. 
\item Assume that  $\mathbf{a}$ has quadrantal symmetry in frequency response, it can be easily computed that $M_{\mathbf{A}}=4$ and $N_{\mathbf{A}}=\frac{(d+1)^2}{4}$. Let $$\Lambda_{s,p}=\{(s,p), (s,-p), (-s,-p), (-s,p)\}$$ be a level set of $\mathbf{\hat a }$. Given $\Omega=\{\{i_1,i_2\} \times\{j_1,j_2\}: gcd(\lvert i_1-i_2 \lvert ,d)=1, gcd(\lvert j_1-j_2 \rvert, d)=1\}$, the submatrix built from rows indexed by $\Lambda_{s,p}$ and columns indexed by $\Omega$ for the case $s\neq 0 \text{ and } p\neq 0$ is 
$$
\left[ \begin{smallmatrix}
\omega_d^{si_1+pj_1} &\omega_d^{si_1+pj_2}&\omega_d^{si_2+pj_2}&\omega_d^{si_2+pj_2} \\
\omega_d^{si_1-pj_1}& \omega_d^{si_1-pj_2}&\omega_d^{si_2-pj_2}&\omega_d^{si_2-pj_2}\\
\omega_d^{-si_1-pj_1}&\omega_d^{-si_1-pj_2}&\omega_d^{-si_2-pj_2}&\omega_d^{-si_2-pj_2}\\
\omega_d^{-si_1+pj_1}&\omega_d^{-si_1+pj_1}&\omega_d^{-si_2+pj_2}&\omega_d^{-si_2+pj_2}\\
\end{smallmatrix} \right],
$$
which is
$$
\left [\begin{array}{cc} 
\omega_d^{si_1}&\omega_d^{si_2} \\
\omega_d^{ -si_1} & \omega_d^{-si_2}\\
\end{array} \right] \otimes \left[ \begin{array}{cc} 
\omega_d^{ pj_1} &\omega_d^{ pj_2} \\
\omega_d^{ -pj_1} & \omega_d^{-pj_2}\\
\end{array} \right]. 
$$
Since the two matrices listed above are always invertible  by given constraints on  $i_1,i_2,j_1$ and $j_2$, the Kronecker product of them is also invertible. For the cases when one of $s,p$ is zero, $\Lambda_{s,p}$ has only two pairs. It is easy to check that  the $2\times 4$ submatrix formed by $\Lambda_{s,p}$ and $\Omega$ has rank 2. In the event that $s=p=0$, it is obvious that the submatrix formed by $\Lambda_{s,p}$ and $\Omega$  has rank 1. Therefore, all conditions stated in (2) of Proposition $\ref{prop2}$ are verified. The conclusion follows. 
\item We only prove the case $\Omega=\{(i_1,0),( i_1+i_2,0),(i_1+2i_2,0),(i_1+3i_2,0): gcd(i_2,d)=1\}$, the other one follows in a similar way.  We just need to prove the case $i_1=0$ since any translation of $\Omega$ is also an admissible set by Proposition \ref{prop2}.  Given $\mathbf{a}$ has  diagonal symmetry in frequency response, we let $\Lambda_{s,p}$ be a level set of $\hat {\mathbf{a}}$ consisting of pairs $ \{(s,p),(-s,-p),(-p,-s),(p,s)\}.$ For the cases when  $s,p \neq 0$ and $s \neq p$, the submatrix built from rows indexed by $\Lambda_{s,p}$ and columns indexed by $\Omega$ is a Vandermonde matrix with 4 distinct bases:
\begin{equation}
\left( \begin{array}{cccc}
1 &\omega_d^{si_2}&\omega_d^{2si_2}&\omega_d^{3si_2} \\
1 &\omega_d^{-si_2}&\omega_d^{-2si_2}&\omega_d^{-3si_2}\\
1 &\omega_d^{pi_2}&\omega_d^{2pi_2}&\omega_d^{3pi_2}\\
1 &\omega_d^{-pi_2}&\omega_d^{-2pi_2}&\omega_d^{-3pi_2}\\
\end{array} \right).
\end{equation} So it is invertible. For the cases $s=0$ or $p=0$ or $s=p$, it is easy to verify the corresponding submatrix has full row rank.  The conclusion follows. 
\item We only prove the case $\Omega=\{i_1,i_2: gcd(\lvert i_1-i_2\vert,d)=1\} \times \{j_1, j_1+j_2,j_1+2j_2, j_1+3j_2: gcd(j_2,d)=1 \},$ the other one follows similarly.  Given $\mathbf{a}$ has  octagonal symmetry in frequency response, we let  $\Lambda_{s,p}$ be a level set of $\hat {\mathbf{a}}$ consisting of pairs \begin{center} $\{(s,p),(p,s),(-p,s),(-s,p), (-s,-p), (-p,-s),(p,-s),(s,-p)\}.$\end{center} 
For the cases when $(s,p)$ satisfies that $s\neq \pm p$ and $s,p\neq0$, we denote by $\mathbf{S}_{s,p}$ the $8 \times 8$ submatrix of $\mathbf{F}_d\otimes \mathbf{F}_d$ built from  rows indexed by $\Lambda_{s,p}$ and columns indexed by $\Omega.$ We will use a similar argument to the proof of the $\ell_0$ symmetry case. Assume $\textbf{c}=(\mathbf{c}(k,l))_{(k,l)\in \Omega}$ and $\mathbf{S}_{s,p}\textbf{c}=\mathbf{0}.$ Then for every $(s_1,p_1) \in \Lambda_{s,p}$, we have 
\begin{equation}\label{haha}
\sum\limits_{i=0}^{3}(\mathbf{c}(i_1, j_1+ij_2)\omega_d^{p_1(j_1+ij_2)})\omega_d^{s_1i_1}+ (\sum\limits_{i=0}^{3}\mathbf{c}(i_2,j_1+ij_2)\omega_d^{p_1(j_1+ij_2)})\omega_d^{s_1i_2} =0.
\end{equation} Plugging $(s_1,p_1)=(s,p),(-s,p)$ into \eqref{haha}, and note that  the matrix 
$
\left[ \begin{smallmatrix}
\omega^{ s i_1} &\omega^{ si_2} \\
\omega^{ -s i_1} &\omega^{ -si_2} \\
\end{smallmatrix}\right]
$ is invertible, we get  $\sum_{i=0}^{3}\mathbf{c}(i_{n}, j_1+ij_2)\omega^{p(j_1+ij_2)}=0$ for $n=1,2$.  Plugging
$(s_1,p_1)=(s,-p),(-s,-p)$, $(s_1,p_1)=(p,s),(-p,s)$ and $(s_1,p_1)=(p,-s),(-p,-s)$ into \eqref{haha}, we can get
\begin{equation}\label{finalequation2}
\left\{ \begin{aligned}
\sum\limits_{i=0}^{3}\mathbf{c}(i_n, j_1+ij_2)\omega_d^{-p(j_1+ij_2)}&=0,\\
\sum\limits_{i=0}^{3}\mathbf{c}(i_n, j_1+ij_2)\omega_d^{s(j_1+ij_2)}&=0,\\
\sum\limits_{i=0}^{3}\mathbf{c}(i_n, j_1+ij_2)\omega_d^{-s(j_1+ij_2)}&=0.\\
\end{aligned}
\right.
\end{equation}  for $n=1,2$.  Now we can write the above linear equations  as a linear system and solve $\mathbf{c}$. 
Since the $4 \times 4$ matrix appearing in the linear system 
$$
\left( \begin{array}{cccc}
\omega_d^{pj_1} &\omega_d^{pj_1+pj_2}&\omega_d^{pj_1+2pj_2}&\omega_d^{pj_1+3pj_2} \\
\omega_d^{-pj_1} &\omega_d^{-pj_1-pj_2}&\omega_d^{-pj_1-2pj_2}&\omega_d^{-pj_1-3pj_2}\\
\omega_d^{sj_1} &\omega_d^{sj_1+sj_2}&\omega_d^{sj_1+2sj_2}&\omega_d^{sj_1+3sj_2}\\
\omega_d^{-sj_1} &\omega_d^{-sj_1-sj_2}&\omega_d^{-sj_1-2sj_2}&\omega_d^{-sj_1-3sj_2}\\
\end{array} \right)
$$ is a nonzero constant times an invertible Vandermonde matrix , we  conclude that $\textbf{c}=\mathbf{0}$. Hence $\mathbf{S}_{s,p}$ is invertible and has full row rank. For other cases of $\Lambda_{s,p}$, it can be reduced to the case of quadrantal symmetry or diagonal symmetry and by previous results, we know the corresponding submatrix has full row rank.  Finally, It is easy to compute that $M_{\mathbf{A}}=8, N_\mathbf{A}=\frac{(d+1)(d+3)}{8}.$ By Proposition \ref{prop2}, the conclusion follows. 
\end{enumerate}
\end{proof}

Similarly, we can also provide various constructions of minimal admissible $\Omega$  consist of unions of periodic sets. We summarize them as follows.

\begin{theorem}
Let $\mathbf{A}$ be a circular convolution operator defined by a convolution kernel $\mathbf{a} \in \ell^2(\Z_d\times \Z_d)$. Suppose $d$ is odd and $d=mJ(m>1)$. Let $\Omega=\{m\Z_d+r: r\in W\subset \Z_m\}$ and $Y = \{\mathbf{e}_{i,j}, \mathbf{Ae}_{i,j},\cdots, \mathbf{A}^{m^2-1}\mathbf{e}_{i,j} : (i,j) \in \Omega \}$. 
\begin{enumerate}
\item Assume that  $\mathbf{a}$ possesses $\ell^{\infty}$ symmetry in frequency response. If $W=\{ 0,1\} \times \Z_m \cup \Z_m \times \{0,1\}$, then $Y$ is a frame for $\ell^2(\Z_d\times \Z_d)$. 
\item Assume that $\mathbf{a}$ possesses quadrantal symmetry in frequency response. Suppose the elements $i_1, i_2, j_1$ and $j_2$ of $\Z_m$ satisfy the conditions $gcd(\lvert i_1-i_2 \lvert ,m)=1$ and  $gcd(\lvert j_1-j_2 \rvert, m)=1$. If  $W=\{(i_1,j_1),(i_2,j_2),(i_1,j_2),(i_2,j_1)\}$, then  $Y$ is a frame for $\ell^2(\Z_d\times \Z_d)$. 

\item Assume that  $\mathbf{a}$ possesses  diagonal symmetry in frequency responses.  Suppose the elements $i_1, i_2, j_1$ and $j_2$ of $\Z_m$ satisfy the
conditions $gcd(i_2,m)=1$ and $gcd(j_2,m)=1$. If $W=\{(i_1,0),( i_1+i_2,0),(i_1+2i_2,0),(i_1+3i_2,0) \} \text{ or }  W=\{(0, j_1),(0,j_1+j_2),(0,j_1+2j_2), (0,j_1+3j_2) \}$, then $Y$ is a frame for $\ell^2(\Z_d\times \Z_d)$. 

\item Assume that $\mathbf{a}$ possesses  octagonal symmetry in frequency response. Suppose the elements $i_1, i_2, j_1$ and $j_2 $ of $\Z_m$ satisfy the
conditions $gcd(\lvert i_1-i_2\vert,m)=1$ and $gcd(j_2,m)=1$. If $W=\{i_1,i_2 \} \times \{j_1, j_1+j_2,j_1+2j_2, j_1+3j_2 \}$, then $Y$ is a frame for $\ell^2(\Z_d\times \Z_d)$. Alternatively, suppose the elements $i_1, i_2, j_1$ and $j_2 $ of $\Z_m$ satisfy the
conditions $gcd(\lvert j_1-j_2\vert,m)=1$ and $gcd(i_2,m)=1$. If $W=\{i_1, i_1+i_2,i_1+2i_2, i_1+3i_2\} \times \{j_1,j_2\}$, then $Y$ is a frame for $\ell^2(\Z_d\times \Z_d)$. 

\end{enumerate}
\end{theorem}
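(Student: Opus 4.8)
The plan is to fuse the two techniques already developed in the paper: the periodic reduction of Theorem~\ref{main2} (now carried out on the product lattice) and the interpolation-theoretic rank computations of Theorem~\ref{stablerecovery4case} (now carried out on $\Z_m\times\Z_m$ in place of $\Z_d\times\Z_d$). Writing $d=mJ$, the sensor set is the union of product cosets $\Omega=\bigcup_{(r_1,r_2)\in W}\big((m\Z_d+r_1)\times(m\Z_d+r_2)\big)$, a union of translates of the sublattice $m\Z_d\times m\Z_d$. The Fourier dual of this sublattice partitions $\Z_d\times\Z_d$ into $J^2$ aliasing blocks, one for each $(k_1,k_2)\in\{0,\dots,J-1\}^2$, each block being the coset $(k_1,k_2)+(J\Z_d\times J\Z_d)$ of size $m^2$. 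The whole argument will run block by block.

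First I would establish the two–dimensional analogue of the reduction in the proof of Theorem~\ref{main2}(1). Taking the discrete Fourier transform of the subsampled data $S_{(m\Z_d+r_1)\times(m\Z_d+r_2)}\big((\mathbf{A}^s)^*\mathbf{f}\big)$ and applying the Poisson summation formula together with the convolution theorem in both variables decouples the orthogonality relations $\langle\mathbf{f},\mathbf{A}^s\mathbf{e}_{i,j}\rangle=0$ into $J^2$ independent systems, one per block. Exactly as in the single–variable case, each system is governed by an $m^2\times m^2$ Vandermonde matrix built from the slice $\mathbf{a}_{(k_1,k_2)}(l_1,l_2)=\hat{\mathbf{a}}(k_1+Jl_1,k_2+Jl_2)$ of the frequency response, and the vanishing of the block component of $\hat{\mathbf{f}}$ is equivalent to the completeness of $\{P_{\Lambda}\mathbf{b}_{(r_1,r_2)}:(r_1,r_2)\in W\}$ on each range space $E_\Lambda$, where $\mathbf{b}_{(r_1,r_2)}=(\mathbf{F}_m\otimes\mathbf{F}_m)\mathbf{e}_{(r_1,r_2)}$, $\Lambda$ runs over the level sets of $\mathbf{a}_{(k_1,k_2)}$, and $P_\Lambda$ projects onto $\operatorname{span}\{\mathbf{e}_l:l\in\Lambda\}$. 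Since each block has size $m^2$, the annihilator degree on every block is at most $m^2$, so the uniform temporal depth $l_{i,j}=m^2-1$ is precisely enough to make this equivalence valid through part (2) of Proposition~\ref{prop2}. The upshot is a clean reformulation: $Y$ is a frame if and only if, for every block $(k_1,k_2)$ and every level set $\Lambda$ of $\mathbf{a}_{(k_1,k_2)}$, the submatrix of $\mathbf{F}_m\otimes\mathbf{F}_m$ with rows indexed by $\Lambda$ and columns indexed by $W$ has full row rank.

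Next I would analyze how the imposed frequency symmetry descends to the slices, and this is the step I expect to be the main obstacle. The symmetry operations (the coordinate negations, and for the diagonal and octagonal cases the coordinate swap) act on $\Z_d\times\Z_d$ by \emph{permuting} the aliasing blocks rather than fixing them: negation sends block $(k_1,k_2)$ to $(-k_1,-k_2)$ and the swap sends it to $(k_2,k_1)$, all reductions mod $J$. Because $d$, hence $J$, is odd, $2$ is invertible mod $J$, so a given operation fixes a block only on the corresponding axis of $\Z_J\times\Z_J$ (namely $k_1=0$, $k_2=0$, or $k_1=\pm k_2$), and only on such fixed blocks can a level set of $\hat{\mathbf{a}}$ meet the block in more than one point. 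Consequently the full–size level sets occur exactly on the distinguished block $(0,0)$, whose slice $\mathbf{a}_{(0,0)}(l_1,l_2)=\hat{\mathbf{a}}(Jl_1,Jl_2)$ inherits the \emph{same} symmetry type on $\Z_m\times\Z_m$ (identified with $\{-\tfrac{m-1}{2},\dots,\tfrac{m-1}{2}\}^2$, using $J\tfrac{m-1}{2}<\tfrac{d-1}{2}$ so that the labelling is faithful); the remaining nontrivial blocks produce only degenerate level sets in which one coordinate is forced to $0$ or the two coordinates coincide. The delicate point to verify with care is that no block can generate a level-set configuration lying outside the repertoire already treated in Theorem~\ref{stablerecovery4case}, and that the chosen $W$, with the gcd conditions now taken modulo $m$, is matched to the largest configuration, the one on the $(0,0)$ block.

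Finally I would close the argument by invoking the computations of Theorem~\ref{stablerecovery4case} verbatim with $d$ replaced by $m$. On the $(0,0)$ block the slice carries the full symmetry, so the relevant submatrix of $\mathbf{F}_m\otimes\mathbf{F}_m$ is exactly the $M_{\mathbf{A}}\times M_{\mathbf{A}}$ Kronecker–product or Vandermonde matrix shown invertible there; for instance in the quadrantal case it factors as $\left[\begin{smallmatrix}\omega_m^{si_1}&\omega_m^{si_2}\\\omega_m^{-si_1}&\omega_m^{-si_2}\end{smallmatrix}\right]\otimes\left[\begin{smallmatrix}\omega_m^{pj_1}&\omega_m^{pj_2}\\\omega_m^{-pj_1}&\omega_m^{-pj_2}\end{smallmatrix}\right]$, invertible since $\gcd(|i_1-i_2|,m)=\gcd(|j_1-j_2|,m)=1$, and the $\ell^\infty$, diagonal, and octagonal cases reproduce the corresponding invertible matrices from that proof. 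For every other block the required rank is strictly smaller and is supplied by the degenerate sub-cases of the same computations (one coordinate zero, or $s=\pm p$), which also reduce the octagonal situation to the quadrantal and diagonal ones. Assembling the full-row-rank verdicts over all $J^2$ blocks and feeding them into the reduction of the second paragraph yields that $Y$ is a frame for $\ell^2(\Z_d\times\Z_d)$ in all four cases.
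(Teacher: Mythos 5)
Your proposal is correct and follows essentially the same route as the paper, which proves this theorem in two sentences by combining the periodic block reduction from Theorem~\ref{main2} with the rank computations of Theorem~\ref{stablerecovery4case}; you simply carry out that combination explicitly (the $J^2$-block Poisson-summation reduction to submatrices of $\mathbf{F}_m\otimes\mathbf{F}_m$, the descent of the symmetry to the $(0,0)$ slice with only degenerate configurations on the other blocks, and the reuse of the invertibility computations with $d$ replaced by $m$). Your write-up is in fact considerably more detailed than the paper's own proof.
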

\begin{proof}
From the proof of Theorem \ref{main2}, we see that it suffices to deal with certain submatrices of $\mathbf{F}_m\otimes \mathbf{F}_m$. We can then follow a similar argument as the proof of Theorem \ref{stablerecovery4case}.

\end{proof}

\section{General case}\label{finiteabeliangroup}
In this section, we will briefly discuss Problem \ref{prob1} for a  finite abelian group $I$ and  generalize  the algebraic characterizations given in Proposition \ref{prop1} and Proposition \ref{prop2}.  We will illuminate the relations between the problem under consideration,  representations of $I$ that commute with the evolution operator $\mathbf{A}$, and the character table of $I$. We will recall several classical results of representation theory of finite abelian groups.

 We assume $I=\{i_0,i_1,\cdots, i_{d-1}\}$, where $d$ is a positive integer in this section.  A character of a finite abelian group $I$ is a group homomorphism $\chi: I\rightarrow S^1$.  The characters of  $I$ form a finite abelian group with respect to the pointwise product.  This group  is called the character group of $I,$ and denoted by $\hat {I}$. In fact, we can prove that $I$ is isomorphic to $\hat I$. Thus, $I$ can serve as index set for $\hat I$ and we assume $\hat {I}=\{\chi_{i_0},\chi_{i_1},\cdots, \chi_{i_{d-1}}\}$. We choose an enumerate of $\hat I$ such that $\chi_{i_0}$ is the trivial character of $I$. Let $\mathbf{X}=\big(\chi_{i_s}(i_t)\big)$ be the matrix whose rows are indexed by $\hat I$ and columns are indexed by $I$.  Then $\mathbf{X}$ is a square matrix of dimension $d$ and it is called the character table of group $I$.  We can further show that the matrix $\frac{1}{\sqrt{d}}\mathbf{X}$ is unitary.

Let us consider the signal space $\ell^2(I)$. The distinguishing feature of functions on  $I$ is that the group acts on itself by left translation, thereby moving around the functions on it.  More specifically, for $i \in I$, the translate $\mathbf{T}_i \mathbf{f}$ of a function $\mathbf{f}$ by $i$ is the function on $I$ defined by  $$(\mathbf{T}_i \mathbf{f})(j)=f(i+j), \text{ for function }  \mathbf{f} \in \ell^2(I), \text{ and } i, j \in I.$$ It is easy to check that the map $\rho: i \rightarrow \mathbf{T}_i$ from $I$ to the $\mathbb{C}-$linear automorphisms of $\ell^2(I)$ is a group homomorphism.  The pair $(\rho, \ell^2(I))$ is called the \textit{left regular representation} of $I$. As we can see, a character $\chi_{i_s}$ of $I$ is indeed a square summable function defined on $I$. Using the fact that $\chi_{s}$ is a group homomorphism, it follows that  $\chi_{s}$ is an eigenvector   for  all translation operators $\{\mathbf{T}_i :i\in I\}$.  Note that we have $d$ group characters, so  the translation operators $\{\mathbf{T}_i :i\in I\}$ are \textit {simultaneously diagonalizable} by characters of $I$. 

By an abuse of notation, suppose $\mathbf{A}$ is an evolution operator on $\ell^2(I)$ such that $\mathbf{f}$ is evolving under the iterated action of $\mathbf{A}$. This discrete evolution process is called \textit{spatially invariant} if  the operator $\mathbf{A}$ commutes with any spatial translations on $\ell^2(I)$. It turns out that $\mathbf{A}$ is indeed a circular convolution operator. The following theorem can be found in \cite[Theorem 5.1.3] {BL2009}:

\begin{theorem} \label{commute} Let $\mathbf{A}$ be a linear operator on $\ell^2(I)$. Then $\mathbf{A}$ commutes with spatial translations $\{\mathbf{T}_i: i\in I\}$ if and only if $\mathbf{A}$ is a circular convolution operator with some convolution kernel $\mathbf{a} \in \ell^2(I)$. 
\end{theorem}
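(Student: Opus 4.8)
The plan is to prove both implications directly: the forward one by a short pointwise computation, and the converse by exhibiting the convolution kernel explicitly.

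For the ``if'' direction, suppose $\mathbf{A}$ is a circular convolution operator with kernel $\mathbf{a}$, so that $(\mathbf{Af})(k)=\sum_{j\in I}\mathbf{a}(j)\mathbf{f}(k-j)$. I would fix $i\in I$ and evaluate both $\mathbf{A}\mathbf{T}_i\mathbf{f}$ and $\mathbf{T}_i\mathbf{A}\mathbf{f}$ pointwise. Using the definition $(\mathbf{T}_i\mathbf{f})(k)=\mathbf{f}(i+k)$, both expressions collapse to $\sum_{j\in I}\mathbf{a}(j)\mathbf{f}(i+k-j)$, so $\mathbf{A}\mathbf{T}_i=\mathbf{T}_i\mathbf{A}$ for every $i\in I$. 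This direction is purely a change of summation variable and requires no structural input.

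For the ``only if'' direction, which is the substantive one, let $0$ denote the identity of $(I,+)$ and set $\mathbf{a}:=\mathbf{A}\mathbf{e}_0$; this is the candidate kernel. The key observation is that the canonical basis vectors are translates of $\mathbf{e}_0$: from $(\mathbf{T}_{-i}\mathbf{e}_0)(j)=\mathbf{e}_0(j-i)$ one reads off $\mathbf{e}_i=\mathbf{T}_{-i}\mathbf{e}_0$. Expanding an arbitrary $\mathbf{f}=\sum_{i\in I}\mathbf{f}(i)\mathbf{e}_i$ and using linearity of $\mathbf{A}$ together with the hypothesis $\mathbf{A}\mathbf{T}_{-i}=\mathbf{T}_{-i}\mathbf{A}$, I would compute $\mathbf{A}\mathbf{f}=\sum_{i\in I}\mathbf{f}(i)\,\mathbf{A}\mathbf{T}_{-i}\mathbf{e}_0=\sum_{i\in I}\mathbf{f}(i)\,\mathbf{T}_{-i}\mathbf{a}$, and then evaluate at $k\in I$ to obtain $(\mathbf{A}\mathbf{f})(k)=\sum_{i\in I}\mathbf{f}(i)\,\mathbf{a}(k-i)=(\mathbf{a}*\mathbf{f})(k)$. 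This exhibits $\mathbf{A}$ as convolution against $\mathbf{a}\in\ell^2(I)$, completing the proof.

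There is no deep obstacle here; the result is essentially a coordinatization of the commutant of the left regular representation, where commuting with every translation forces the operator to be determined by its single column $\mathbf{A}\mathbf{e}_0$. The one point that genuinely requires care is consistency of the translation convention: since the paper uses $(\mathbf{T}_i\mathbf{f})(j)=\mathbf{f}(i+j)$ but defines convolution with a subtraction, I must track the sign so that $\mathbf{e}_i=\mathbf{T}_{-i}\mathbf{e}_0$ and the slid translation reproduces exactly $\mathbf{a}(k-i)$ rather than $\mathbf{a}(k+i)$. Aside from this bookkeeping, both directions are immediate.
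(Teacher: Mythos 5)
Your proof is correct: both directions check out under the paper's conventions $(\mathbf{T}_i\mathbf{f})(j)=\mathbf{f}(i+j)$ and $(\mathbf{a}*\mathbf{f})(k)=\sum_{i\in I}\mathbf{a}(i)\mathbf{f}(k-i)$, and the sign bookkeeping giving $\mathbf{e}_i=\mathbf{T}_{-i}\mathbf{e}_0$ and $(\mathbf{T}_{-i}\mathbf{a})(k)=\mathbf{a}(k-i)$ is handled properly. The paper itself gives no proof (it only cites Theorem 5.1.3 of the reference), and your argument --- identifying the commutant of the regular representation via the single column $\mathbf{A}\mathbf{e}_0$ --- is exactly the standard one.
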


Now let $\mathbf{a} \in \ell^2(I)$, we definite $\mathbf{\hat a} \in \ell^2(\hat I)$ by 
$$\mathbf{\hat a}(\chi_{s})=\langle \mathbf{a}, \chi_s\rangle, s \in I.$$  Suppose $\mathbf{A}$ is a circular convolution operator with kernel $\mathbf{a}$. It follows from Theorem $\ref{commute}$ that  the characters of $I$ are also eigenvectors of $\mathbf{A}$. By computation,  we can obtain
\begin{equation}\label{generall}
\mathbf{A}= (\frac{1}{\sqrt{d}}\mathbf{X})^T \mathbf{diag(\hat a)} \frac{1}{\sqrt{d}}  \overline{\mathbf{X}},
\end{equation}
where $\overline{\mathbf{X}}$ denotes the complex conjugate of the matrix $\mathbf{X}$. In the case of  $I=\Z_d$ and $\hat I=\{\chi_0,\chi_1,\cdots, \chi_{d-1}\}$, where $\chi_s(1)=e^{\frac{2\pi i s}{d}}$ for $s \in \Z_d$, we can show that $\frac{1}{\sqrt{d}}  \overline{\mathbf{X}}=\mathbf{F}_d$  and  the equation $\eqref{generall}$ is exactly the same with the equation \eqref{equation1}.  Now we are ready to extend Proposition \ref{prop1} and Proposition \ref{prop2} to the general case. 
\begin{proposition}
Let $\mathbf{A}$ be a circular convolution operator with a kernel $\mathbf{a} \in \ell^2(I)$ and $\{\Lambda_k: k=1,\cdots, N_{\mathbf{A}}\}$ be the level sets of ${\mathbf{\hat a}}$. Suppose that  $\Omega \subset I$  and  Let 
 $Y = \{\mathbf{e}_{i}, \mathbf{Ae}_{i},\cdots, \mathbf{A}^{l_i}\mathbf{e}_{i} : i \in \Omega \}$. 
 \begin{enumerate}
\item If $Y$ is a frame for $\ell^2(I)$, then for each $k$,  the submatrix of $\overline{\mathbf{X}}$ built from rows indexed by $\Lambda_k$ and columns indexed by $\Omega$  has rank $|\Lambda_k|$.  
\item For each $i \in \Omega $, we define  by $r_i$  the degree of the $ \mathbf{A}$-annihilator of $\mathbf{e}_{i}$. If for each $i \in \Omega$ and each $k$, $l_{i} \geq r_{i}-1$ and   the submatrix of $\overline{\mathbf{X}}$ built from rows indexed by $\Lambda_k$ and columns indexed by $\Omega$  has rank $|\Lambda_k|$, then  $Y$ is a frame for $\ell^2(I)$.

\item If for each $i \in \Omega$ and each $k$, $\mathbf{A}^{l_i+1} \mathbf{e}_{i}$ is in the space spanned by $Y$ and the submatrix of $\overline{\mathbf{X}}$ built from rows indexed by $\Lambda_k$ and columns indexed by $\Omega$ has rank $|\Lambda_k|$,  then  $Y$ is a frame for $\ell^2(I)$. 
\end{enumerate}
\end{proposition}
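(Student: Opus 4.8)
The plan is to mimic the proof of Proposition \ref{prop1} essentially verbatim, with the Fourier matrix $\mathbf{F}_d$ replaced by the unitary $U=\tfrac{1}{\sqrt{d}}\overline{\mathbf{X}}$ furnished by the character-table diagonalization \eqref{generall}. First I would apply $U$ to every vector of $Y$. Using the unitarity of $\tfrac{1}{\sqrt{d}}\mathbf{X}$ (the character orthogonality relations) one checks that $U$ is itself unitary and that $U\mathbf{A}=\mathbf{diag}(\mathbf{\hat a})\,U$, so that the image $\tilde Y=\{\mathbf{diag}(\mathbf{\hat a})^{l}\mathbf{f}_i : i\in\Omega,\ 0\le l\le l_i\}$, where $\mathbf{f}_i=U\mathbf{e}_i$, is a frame for $\ell^2(I)$ if and only if $Y$ is. This returns us to exactly the diagonal situation of Proposition \ref{prop1}. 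Note that $\mathbf{f}_i$ is, up to the scalar $1/\sqrt{d}$, the $i$-th column of $\overline{\mathbf{X}}$, whose coordinates are indexed by the characters of $I$.

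Next I would set up the dictionary between the spectral data and the rank hypothesis. The eigenspace projections of $\mathbf{diag}(\mathbf{\hat a})$ are the coordinate projections $P_k$ onto $\mathrm{span}\{\mathbf{e}_s : s\in\Lambda_k\}$, so $P_k\mathbf{f}_i$ is simply the restriction of the $i$-th column of $\tfrac{1}{\sqrt{d}}\overline{\mathbf{X}}$ to the rows in $\Lambda_k$. Hence, in finite dimension, $\{P_k\mathbf{f}_i : i\in\Omega\}$ is a frame for the range space $E_k$ of $P_k$ precisely when the matrix whose columns are these vectors has rank $|\Lambda_k|$; and that matrix is exactly $\tfrac{1}{\sqrt{d}}$ times the submatrix of $\overline{\mathbf{X}}$ with rows indexed by $\Lambda_k$ and columns indexed by $\Omega$. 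Since the scalar does not affect rank, the condition ``$\{P_k\mathbf{f}_i\}$ is a frame for $E_k$'' used in Proposition \ref{prop1} and the condition ``the submatrix of $\overline{\mathbf{X}}$ has rank $|\Lambda_k|$'' stated here are literally the same condition.

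With this dictionary the three parts follow from the identical arguments. For (1), if $\tilde Y$ is complete and $\mathbf{g}\in E_k$ is orthogonal to each $P_k\mathbf{f}_i$, then using $P_k\mathbf{diag}(\mathbf{\hat a})=\lambda_k P_k$, $P_k^2=P_k$ and $P_k\mathbf{g}=\mathbf{g}$ one shows $\mathbf{g}\perp\tilde Y$, forcing $\mathbf{g}=\mathbf{0}$; thus the submatrix has full row rank. For (2), the bound $l_i\ge r_i-1$ on the degree of the $\mathbf{A}$-annihilator lets one extend the orthogonality relations to all powers $l=0,\dots,N_{\mathbf{A}}-1$, and then invertibility of the $N_{\mathbf{A}}\times N_{\mathbf{A}}$ Vandermonde matrix in the distinct eigenvalues $\lambda_1,\dots,\lambda_{N_{\mathbf{A}}}$ decouples the conditions into $\langle\mathbf{g},P_k\mathbf{f}_i\rangle=0$ for every $k$; the rank hypothesis then gives $P_k\mathbf{g}=\mathbf{0}$ for all $k$, so $\mathbf{g}=\sum_k P_k\mathbf{g}=\mathbf{0}$ and $\tilde Y$ is complete, hence a frame. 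Part (3) is identical, with $l_i\ge r_i-1$ replaced by the hypothesis $\mathbf{A}^{l_i+1}\mathbf{e}_i\in\mathrm{span}(Y)$, which again forces every power $\mathbf{diag}(\mathbf{\hat a})^{l}\mathbf{f}_i$ into the span of $\tilde Y$.

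The step I would treat most carefully is not a deep obstacle but a bookkeeping one: selecting the correct unitary among $\mathbf{X},\mathbf{X}^{T},\overline{\mathbf{X}},\mathbf{X}^{*}$. In the $\mathbf{F}_d$ case the distinction is invisible because $\mathbf{F}_d=\mathbf{F}_d^{T}$, but here $\mathbf{X}$ need not be symmetric, so I must verify, consistently with the convention in \eqref{generall}, that it is $U=\tfrac{1}{\sqrt{d}}\overline{\mathbf{X}}$ (and not $\tfrac{1}{\sqrt{d}}\mathbf{X}$) that simultaneously sends $\mathbf{e}_i$ to a column of $\overline{\mathbf{X}}$ and conjugates $\mathbf{A}$ into $\mathbf{diag}(\mathbf{\hat a})$. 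Both facts rest on the unitarity of $\tfrac{1}{\sqrt{d}}\mathbf{X}$ (already recorded above) together with Theorem \ref{commute}, which guarantees that a spatially invariant $\mathbf{A}$ is a convolution operator diagonalized by the character basis. Once this identification is fixed, the remainder is a transcription of the single-variable argument.
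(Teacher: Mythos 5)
Your proposal is correct and follows exactly the route the paper intends: the paper states this proposition without a separate proof, relying on the diagonalization \eqref{generall} to reduce it verbatim to the argument of Proposition \ref{prop1}, which is precisely your transcription via the unitary $\tfrac{1}{\sqrt{d}}\overline{\mathbf{X}}$. Your care in identifying the correct unitary among $\mathbf{X},\mathbf{X}^T,\overline{\mathbf{X}},\mathbf{X}^*$, and in noting that the rank condition on the $\Lambda_k\times\Omega$ submatrix is literally the frame condition on $\{P_k\mathbf{f}_i\}$, supplies exactly the bookkeeping the paper leaves implicit.
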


An immediate corollary we can get is
\begin{corollary} Let $\mathbf{A}$ be a circular convolution operator with a kernel $\mathbf{a} \in \ell^2(I)$.  Suppose  $\Omega \subset I$. If the set $\Omega$ 
is an admissible set for $\mathbf{A}$, then so is any translation of $\Omega$. 

\end{corollary}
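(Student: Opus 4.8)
The plan is to exploit the two structural facts highlighted in this section: spatial translations commute with $\mathbf{A}$ (Theorem \ref{commute}), and a translation operator is unitary on $\ell^2(I)$. Together these reduce the statement to the observation that translating $\Omega$ is the same as applying a fixed unitary operator to the generating family $Y$, and unitaries preserve frames.

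First I would record the action of the translation operator on the canonical basis. From the definition $(\mathbf{T}_c\mathbf{f})(j)=\mathbf{f}(c+j)$ one computes $\mathbf{T}_c\mathbf{e}_i=\mathbf{e}_{i-c}$, and hence $\mathbf{T}_{-c}\mathbf{e}_i=\mathbf{e}_{i+c}$. Since $\mathbf{T}_{-c}$ merely permutes the orthonormal basis $\{\mathbf{e}_i:i\in I\}$, it is a unitary operator on $\ell^2(I)$. Next, assume $\Omega$ is admissible, witnessed by nonnegative integers $l_i$ making $Y=\{\mathbf{e}_i,\mathbf{A}\mathbf{e}_i,\cdots,\mathbf{A}^{l_i}\mathbf{e}_i:i\in\Omega\}$ a frame. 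By Theorem \ref{commute}, $\mathbf{A}$ commutes with every spatial translation, so $\mathbf{T}_{-c}$ commutes with each power $\mathbf{A}^l$. Therefore
\[
\mathbf{T}_{-c}(\mathbf{A}^l\mathbf{e}_i)=\mathbf{A}^l\mathbf{T}_{-c}\mathbf{e}_i=\mathbf{A}^l\mathbf{e}_{i+c},
\]
which gives $\mathbf{T}_{-c}(Y)=\{\mathbf{e}_{i+c},\mathbf{A}\mathbf{e}_{i+c},\cdots,\mathbf{A}^{l_i}\mathbf{e}_{i+c}:i\in\Omega\}$. This is precisely the generating family attached to the translated set $\Omega+c=\{i+c:i\in\Omega\}$, with the exponent $l_i$ reassigned to the shifted index $i+c$.

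Finally, because $\mathbf{T}_{-c}$ is unitary, the image $\mathbf{T}_{-c}(Y)$ of the frame $Y$ is again a frame for $\ell^2(I)$ (with the same frame bounds). Hence $\Omega+c$ is admissible, and as $c$ ranges over $I$ this covers every translation of $\Omega$. I expect no serious obstacle here: the statement is a soft consequence of the operator-theoretic structure rather than of the refined rank conditions. The only points requiring care are the index bookkeeping in the identity $\mathbf{T}_c\mathbf{e}_i=\mathbf{e}_{i-c}$ and the explicit use of commutation (Theorem \ref{commute}) to pass $\mathbf{T}_{-c}$ through the powers $\mathbf{A}^l$; one could alternatively deduce the result from the preceding Proposition by noting that translating $\Omega$ multiplies the columns of the relevant submatrix of $\overline{\mathbf{X}}$ by unimodular characters, leaving all ranks unchanged, but the direct unitary argument is cleaner.
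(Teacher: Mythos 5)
Your proof is correct. The paper itself offers no argument---it presents the statement as an ``immediate corollary'' of the preceding proposition, so the intended route is the one you only sketch at the end: the column of $\overline{\mathbf{X}}$ indexed by $i+c$ is, within each block of rows $\Lambda_k$, the column indexed by $i$ multiplied entrywise by the unimodular vector $\bigl(\overline{\chi_s(c)}\bigr)_{s\in\Lambda_k}$; equivalently, the submatrix for $\Omega+c$ is the submatrix for $\Omega$ left-multiplied by an invertible diagonal matrix, so all the rank conditions (and the annihilator degrees $r_i$) are unchanged. Your primary argument is genuinely different and more self-contained: you use only that $\mathbf{T}_{-c}$ is a unitary permutation of the canonical basis and that it commutes with $\mathbf{A}$ (Theorem \ref{commute}), so that $\mathbf{T}_{-c}(Y)$ is exactly the generating family for $\Omega+c$ with the same exponents, and unitary images of frames are frames. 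This buys you independence from the spectral machinery (no characters, level sets, or rank computations needed) and even preserves the frame bounds; what it does not give you, and what the proposition-based route does, is the companion fact that the rank certificates themselves transfer, which is what one would want when chaining this corollary with the constructive criteria in parts (2) and (3). The index bookkeeping $\mathbf{T}_c\mathbf{e}_i=\mathbf{e}_{i-c}$ is handled correctly, and the commutation step is properly justified by the cited theorem, so there is no gap.
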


\section{Concluding Remarks}\label{conclusion}

In this paper, we have characterized universal spatiotemporal sampling sets for discrete spatially invariant evolution systems. In the one variable case $I=\Z_d$, we have shown that the lower bound of sensors numbers we derived is achievable and how to construct minimal universal irregular and unions of periodic sensor locations  for  convolution operators with eigenvalues  subject to the same largest geometric multiplicity. In the two variable case $I=\Z_d\times \Z_d$, we have shown that the lower bound of sensor numbers  derived may not be achievable and the problem of finding universal spatiotemporal sampling sets is less favorable. We restricted ourselves to the evolution systems in which the convolution operators have certain symmetries in the Fourier domain and gave various constructions of minimal universal irregular and unions of periodic admissible sets. The ideas can be easily generalized to the three or higher variable case $I=\Z_{d_1}\times \cdots \times \Z_{d_n}$. As we have already seen, the problem of finding a deterministic admissible set $\Omega$ with $\lvert \Omega\rvert$ as few as possible for a 2D convolution operator $\mathbf{A}$ is not easy in general; however, if we relax the condition of finding full spark matrices to finding matrices with \textit {restricted isometry property} (the definition can be found in \cite{PGH11} ) from discrete Fourier matrices, our algebraic perspective allows us to obtain random constructions of universal (but not minimal) admissible sets  with high probability, which are parallel to those classical results in the literature of compressed sensing, e.g., \cite{CRT06}. The results in the unions of periodic case allow us to obtain near-optimal deterministic constructions of  admissible sets $\Omega$ by choosing the period $m$ to be prime and making use of Weil sum; see related results in \cite{X11}.  Finally, we have also studied the general finite abelian group case and established a  connection between the problem under consideration with the character table of $I$.

\section{Acknowledgements}
The author would like to sincerely thank Akram Aldroubi for his careful reading of the manuscript and insightful comments. The author is indebted to anonymous reviewers for their valuable advice to improve the paper substantially. The author  would also like to thank  Xuemei Chen, Keaton Hamm and James Murphy for their useful suggestions to revise the presentation of the paper. 
\bibliography{DynamicalsamplingRevised2}{}
\bibliographystyle{plain}

\end{document}